\newlength\mylength
\newcolumntype{C}[1]{>{\centering\arraybackslash}p{#1}}
\newtheorem{teor}{Theorem}[section]
\newtheorem{theorem}[teor]{Theorem}
\newtheorem{definition}[teor]{Definition}
\newtheorem{proposition}[teor]{Proposition}
\begin{document}

\title{The transition to synchronization of networked systems}

\author{Atiyeh Bayani} 
\affiliation{Department of Biomedical Engineering, Amirkabir University of Technology (Tehran polytechnic), Iran}

\author{Fahimeh Nazarimehr} 
\affiliation{Department of Biomedical Engineering, Amirkabir University of Technology (Tehran polytechnic), Iran}

\author{Sajad Jafari} \thanks{Corresponding Author: sajadjafari83@gmail.com}
\affiliation{Department of Biomedical Engineering, Amirkabir University of Technology (Tehran polytechnic), Iran}
\affiliation{Health Technology Research Institute, Amirkabir University of Technology (Tehran polytechnic), Iran}

\author{Kirill Kovalenko}
\affiliation{Scuola Superiore Meridionale, School for Advanced Studies, Naples, Italy}

\author{Gonzalo Contreras-Aso}
\affiliation{Universidad Rey Juan Carlos, Calle Tulip\'an s/n, 28933 M\'ostoles, Madrid, Spain}

\author{Karin Alfaro-Bittner}
\affiliation{Universidad Rey Juan Carlos, Calle Tulip\'an s/n, 28933 M\'ostoles, Madrid, Spain}

\author{Ruben J. S\'anchez-Garc\'ia}  \thanks{Corresponding Author: rsanchez-garcia@turing.ac.uk, r.sanchez-garcia@soton.ac.uk}
\affiliation{Mathematical Sciences, University of Southampton, Southampton SO17 1BJ, UK}
\affiliation{Institute for Life Sciences, University of Southampton, Southampton, SO17 1BJ, UK}
\affiliation{The Alan Turing Institute, London, NW1 2DB, UK}

\author{Stefano Boccaletti}
\affiliation{CNR - Institute of Complex Systems, Via Madonna del Piano 10, I-50019 Sesto Fiorentino, Italy}
\affiliation{Sino-Europe Complexity Science Center, School of Mathematics, North University of China, Shanxi, Taiyuan 030051, China}
\affiliation{Research Institute of Interdisciplinary Intelligent Science, Ningbo University of Technology, Zhejiang, Ningbo 315104, China}

\date{\today}

\begin{abstract}
We study the synchronization properties of a generic networked dynamical system, and show
that, under a suitable approximation, the transition to synchronization can be predicted with the only help of eigenvalues and eigenvectors of the graph Laplacian matrix. The transition comes out to be made of a well defined sequence of events, each of which corresponds to a specific clustered state. The network's nodes involved in each of the clusters  can be identified, and the value of the coupling strength at which the events are taking place can be approximately ascertained.
Finally, we present large-scale simulations which show the accuracy of the approximation made, and
of our predictions in describing the synchronization transition of both synthetic and real-world large size networks, and we even report that the observed sequence of clusters is preserved in heterogeneous networks made of slightly non-identical systems.
 \end{abstract}

 \maketitle

From brain dynamics and neuronal firing, to power grids or financial markets, synchronization of networked  units
is the collective behavior characterizing the normal functioning of most natural and man made systems~\cite{book1,book2,book3, BoccalettiPhysRep2006,Motter_NatPhys2013,Rodrigues_PhysReport2016,Breakspear_NatNeuro2017}.
As a control parameter (typically the coupling strength in each link of the network) increases, a transition occurs
between a fully disordered and gaseous-like phase (where the units evolve in a totally incoherent manner) to an ordered or solid-like phase (in which, instead, all units
follow the same trajectory in time).

The transition between such two phases can be discontinuous and irreversible, or smooth, continuous, and reversible.
The first case is known as Explosive Synchronization~\cite{Boccalettiexplosive}, which has been described in various circumstances~\cite{tanaka1997,pazo2005,jesusprl2011,leyvaprl,zhangprl}, and which refers to an abrupt onset of synchronization following
an infinitesimally small change in the control parameter, with hysteresis loops that may be observed as in a thermodynamic first-order phase transition.
The second case is the most commonly observed one, and corresponds instead to a second-order phase transition, resulting in intermediate states emerging in between
the two phases. Namely, the path to synchrony~\cite{JesusPRL2007} is here characterized by a sequence of events where structured states emerge made of different functional modules (or clusters), each one evolving in unison. This is known as cluster synchronization (CS)~\cite{sorrentinopre2007,williamsprl2013,jiprl2013}, and a lot of studies pointed out that the structural properties of the graph are responsible for the way nodes clusterize during CS~\cite{nicosiaprl2013,pecoranature,gambuzza,dellarossa}. In particular, it was argued that the clusters formed during the transition
are to be connected to the symmetry orbits and/or to the equitable partitions of the graph~\cite{pecoranature,fufu}. Precisely, partitions of the network's nodes into orbits (for the whole, or a subgroup, of the automorphism group of the graph) or, more generally, equitable partitions, have been identified as potential clusters that can synchronize before the setting of complete synchronization in the network~\cite{sorrentino2016,siddique2018}.The fact that a given graph partition can sustain cluster synchronization of its cells is different, however, from the problem faced in our study i.e., that of determining whether it will indeed be realized for some range of the coupling strength parameter and, more importantly, in which order do clusters synchronize during the transition between the two main phases (i.e., as the coupling strength increases from zero).

On the other hand, clusters due to symmetries are found rather ubiquitously, as most of real-world networks have a large number of localized symmetries, that is, a large number of small subgraphs called symmetric motifs~\cite{MacArthur2008, Sanchez2020}, where independent (formally, support-disjoint) symmetries are generated, with each symmetric motif made of one or more orbits. For instance, Ref.~\cite{Sanchez2020} analyzes several real-world networks and, in all cases, gives evidence of a large number of symmetric motifs: from 149 motifs in the protein-protein interaction network of the yeast (a network of 1,647 nodes) to over 245,000 motifs in the LiveJournal social network (more than 5 million nodes). More evidences of symmetries in real-world networks can be found in Refs.~\cite{MacArthur2009, Ball2018, Ward2021}.

In our work we assume that, during the transition, the synchronous solution of each cluster does not differ substantially from that
of the entire network and, under such an approximation, we elaborate a practical technique which is able to elucidate
the transition to synchronization in a generic network of identical systems.
Namely, we introduce a (simple, effective, and limited in computational demand) method which is able to: i) predict the entire sequence of events
that are taking place during the transition, ii) identify exactly which graph's node is belonging to each of the emergent clusters, and iii) provide
a well approximated calculation of the critical coupling strength value at which each of such clusters is observed to synchronize.
We also demonstrate that, under the assumed approximation, the sequence of events is in fact universal, in that it is independent on the specific dynamical system
operating in each network's node and depends, instead, only on the graph's structure.
Our study, moreover, allows to clarify that the emerging clusters are those groups of nodes which are indistinguishable at
the eyes of any other network's vertex. This means that all nodes in a cluster have the same connections (and the same weights) with nodes not belonging to the cluster, and therefore they receive the same dynamical input from the rest of the network.
As such, we prove that synchronizable clusters in a network are subsets more general than those defined by the graph's symmetry orbits, and at the same time more specific than those described by equitable partitions (see the Supplementary Information for a detailed description of the exact relationship between the observed clusters and the graph's symmetry orbits and equitable  partitions).
Finally, we present extensive numerical simulations with both synthetic and real-world networks, which demonstrate the high accuracy of our approximations and  predictions, and also report on synchronization features in heterogeneous networks showing that the predicted cluster sequence can be maintained even for networks made of non-identical dynamical units.

\section*{Results}

\subsection*{The synchronization solution}

The starting point is a generic ensemble of $N$ identical dynamical systems interplaying over a network $G$. The equations of motion are
\begin{equation}\label{eq:general}
	\begin{aligned}
		\dot{\mathbf{x}}_i & = \mathbf{f}(\mathbf{x}_i ) - d \sum_{j=1}^{N} {\cal L}_{ij} \: \mathbf{g}(\mathbf{x}_{j}) ,
	\end{aligned}
\end{equation}
where $\mathbf{x}_i(t)$ is the $m$-dimensional vector state describing the dynamics of each node $i$, $\mathbf{f}: \mathbb{R}^m \longrightarrow \mathbb{R}^m$ describes the local (identical in all units) dynamical flow, $d$ is a real-valued coupling strength, ${\cal L}_{ij}$ is the $(ij)$ entry of the Laplacian matrix associated to $G$, and  $\mathbf{g}: \mathbb{R}^{m} \longrightarrow \mathbb{R}^m$ is the output function through which units interact.  ${\cal L}$ is a zero-row matrix, a property which, in turn, guarantees existence and invariance of the network's synchronized solution $\mathbf{x}_s(t) = \mathbf{x}_1(t)  = \ldots = \mathbf{x}_N(t)$.

The necessary condition for the stability of such solution can be assessed by means of the Master Stability Function (MSF) approach, a method initially developed for pairwise coupled systems~\cite{pecora1998master}, and later extended in many ways to heterogeneous networks~\cite{sun2009master}, and to time-varying~\cite{pecoraboc2} and higher-order~\cite{boccanatcomm} interactions.  When possible, the MSF formalism can be complemented by other global approaches (such as the Lyapunov functionals) which provide sufficient conditions for stability. As the MSF is of rather standard use, all details about the associated mathematics are contained in the Methods section. Notice, moreover, that our Eq.~(\ref{eq:general}) is in fact written in the simplest as possible form for the easiness of the mathematical passages that will be described below.  However, our approach can be extended straightforwardly (see details in Refs.~\cite{boccanatcomm,boccareview2023} and references therein) to the much more general case of a coupling term given by $ d \sum_{j=1}^{N} {\cal A}_{ij} \: \mathbf{g}(\mathbf{x}_{i},\mathbf{x}_{j})$, where ${\cal A}_{ij}$ is the $(ij)$ entry of the graph's adjacency matrix, under only two assumptions: (i) the Laplacian matrix ${\cal L}$ associated to ${\cal A}$ is diagonalizable, and the set of eigenvectors form an orthonormal basis of ${\Bbb R}^N$, and (ii) the coupling function $\mathbf{g}(\mathbf{x}_{i},\mathbf{x}_{j})$ is synchronization non-invasive i.e., it strictly vanishes at the synchronization solution ($\mathbf{g}(\mathbf{x}_s,\mathbf{x}_s)=0$).

The full details behind the MSF approach can be found in the Methods section. We here concisely summarize them to pave the way for the cluster synchronization analysis which will follow. In essence, one considers perturbations around the synchronous state, and performs linear stability analysis of Eq.~(\ref{eq:general}). Due to the properties of $\mathcal{L}$, the perturbations can be expanded as linear combinations of its eigenvectors $\mathbf{v}_i$ (which span the space $\mathcal{T}$ tangent to the synchronization manifold $\mathcal{M}$). The coefficients $\eta_i$ of this expansion obey variational equations involving the Jacobians of $\mathbf{f}$ and $\mathbf{g}$, which only differ to one another for the value of the corresponding eigenvalue $\lambda_i$. This fact allows one to parameterize the problem, and to refer to a unique variational equation which is now an explicit function of the parameter $\nu=d \lambda_i$. The maximum Lyapunov exponent $\Lambda$ of such a parametric equation can then be computed for increasing values of $\nu$. The function $\Lambda(\nu)$ is the Master Stability Function (MSF), and its values assess the expansion (if positive) or contraction (if negative) rates in the directions of eigenvectors $\mathbf{v}_i$. One needs all of these rates to be negative in order for the synchronization manifold $\mathcal{M}$ to be stable.

\begin{figure}
	\includegraphics[width=1\linewidth]{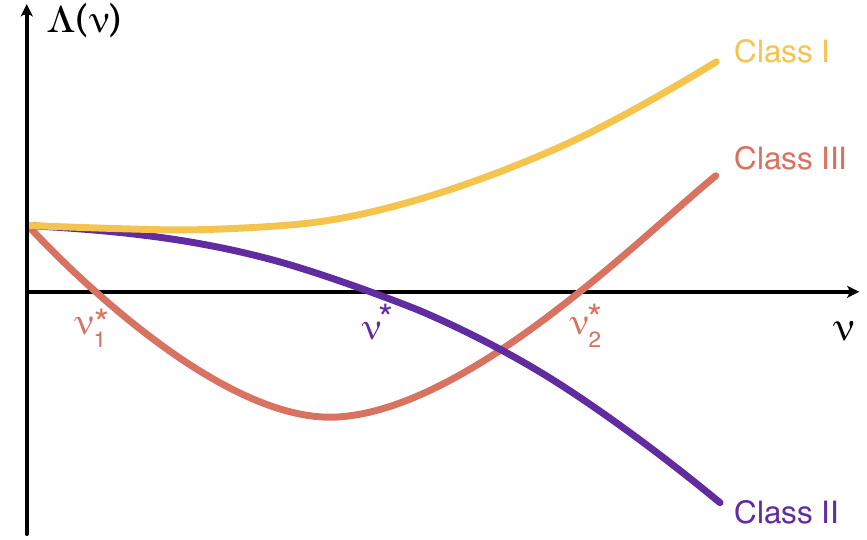}
	\caption{{\bf Classes in the Master Stability Function.} The maximum Lyapunov exponent $\Lambda$ as a function of the parameter $\nu$ (see text for definitions), for a chaotic flow ($\Lambda(0)>0$). The curve $\Lambda(\nu)$ is called the Master Stability Function (MSF). Given any pair of $\mathbf{f}$ and $\mathbf{g}$, only three classes of systems are possible: Class I systems (yellow line) for which the MSF  does not intercept the horizontal axis; Class II systems (violet line), for which the MSF has a unique intercept with the horizontal axis at $\nu=\nu^*$ ; Class III systems (brown line), for which the MSF intercepts the horizontal axis at two critical points $\nu=\nu^*_1$ and $\nu=\nu^*_2$.	}
	\label{fig:Example}
\end{figure}

As noticed for the first time in Chapter 5 of Ref.~\cite{BoccalettiPhysRep2006}, and as illustrated in Fig.~\ref{fig:Example}, all possible choices of (chaotic) flows and output functions, are in fact categorizable into only three classes of systems:

\begin{itemize}
	\item Class I systems, for which the MSF does not intercept the horizontal axis, like the yellow curve in Fig.~\ref{fig:Example}.
	These systems intrinsically defy synchronization, because all directions of ${\cal T}$ are always (i.e., at all values of $d$) expanding, no matter which network architecture is used for connecting the nodes. Therefore, neither the synchronized solution $\mathbf{x}_s(t)$ nor any other cluster-synchronized state will ever be stable.
	\item Class II systems, for which the MSF has a unique intercept with the horizontal axis at a critical value $\nu=\nu^*$, like the violet line of Fig.~\ref{fig:Example}. The scenario here is the opposite of that of Class I. Indeed, given any network $G$, the condition $d \lambda_2 > \nu^*$ warrants stability of the synchronized solution.
	These systems, therefore, are always synchronizable, and the threshold for synchronization is $d_c \equiv \frac{\nu^*}{\lambda_2}$ i.e., is inversely proportional to the second smallest eigenvalue of the Laplacian matrix.
	\item Class III systems, for which the MSF intercepts instead the horizontal axis at two critical points $\nu=\nu^*_1$ and $\nu=\nu^*_2$, like the brown V-shaped curve of Fig.~\ref{fig:Example}. In order for the synchronization solution to be stable, it is required in this case that the entire spectrum of eigenvalues of ${\cal L}$
	falls (when multiplied by $d$) in between $\nu^*_1$ and $\nu^*_2$. In other words, the two conditions $d \lambda_N < \nu^*_2$ and $d \lambda_2 > \nu^*_1$ must be simultaneously verified, and this implies that not all networks succeed to synchronize Class III systems. In fact, the former condition gives a bound $d_{\text max}= \frac{\nu^*_2}{\lambda_N}$ for the coupling strength above which instabilities in tangent space start to occur in the direction of $\mathbf{v}_N$, the latter provides once again the threshold $d_c \equiv \frac{\nu^*_1}{\lambda_2}$ for complete synchronization to occur.
\end{itemize}

In general, one should point out that the region of stability could even be formed by the union of several intervals on the $\nu$ axis. However, the aim of present work is to describe the first transition to synchronization (i.e., the process that starts at $d = 0$ and occurs when progressively increasing the coupling strength) for which the three mentioned classes are indeed the only possible scenarios.
When a system will show multiple regions of stability in the $\nu$ axis, instead, a series of synchronization and de-synchronization transitions
(as many as the stability regions of the system) will occur, and this latter scenario will be reported by us elsewhere.

Finally, it is crucial to remark that all the above results are formally valid only for the whole network's synchronous solution.
The trajectories followed by the nodes' dynamics in each cluster-synchronous state slightly differ, instead, from those which are followed in the global solution, as they rigorously depend on the quotient network (and therefore on topology, node dynamics, and clusterization, see the Methods section for a detailed description of such differences), and several ad-hoc methods have been proposed to assess the stability of synchronization in each specific clustered state \cite{pecoranature,fufu,cho2017,dellarossa2020,zhang2020,lodi2021}.

However, since we are focusing on the transition to synchronization (i.e., on a regime where the coupling strength is normally very small), in the following we will adopt the approximation that such a difference will only be tight, and will not give rise to substantial changes in the calculation of maximum Lyapunov exponents. As a consequence, we will refer to the MSF calculated for the full network’s synchronous solution  as a an approximation of the values of the maximum Lyapunov exponents corresponding to the eigenvectors orthogonal to each cluster's synchronous solutions.
Notice, furthermore, that at each value of the coupling strength one might have a distinct cluster-synchronous state.

\subsection*{The path to synchronization}

With all this in mind, let us now move to describe all salient features characterizing the transition to the synchronization solution (as $d$ increases from 0), and in particular to predict all the intermediate events that are taking place during such a transition. Since now, we anticipate that our results are valid for all systems in Class II, as well as for those in Class III (up to the maximum allowed value of the coupling strength i.e., for $d < \frac{\nu^*_2}{\lambda_N}$).

There are three conceptual steps that need to be made.

The first step is that, as $d$ progressively increases, the eigenvalues $\lambda_i$ cross the critical point ($\nu=\nu^*$ in Class II, or $\nu=\nu^*_1$ in Class III) sequentially. The first condition which will be met will be, indeed, $d \lambda_N > \nu^*$ in Class II ($d \lambda_N > \nu^*_1$ in Class III), while for larger values of $d$ the other eigenvalues will cross the critical point one by one (if they are not degenerate) and in the reverse order of their size.

Therefore, one can use this very same order to progressively unfold the tangent space ${\cal T}$. In particular, at any value of $d$, ${\cal T}$ can be factorized as ${\cal T}^+ (d) \otimes {\cal T}^- (d)$, where ${\cal T}^+ (d)$ [${\cal T}^- (d)$] is the subspace generated by the set of eigenvectors $\{ \mathbf{v}_i  \}$ whose corresponding $d \lambda_i$ are below (above) the stability condition, i.e. for which one has $d \lambda_i \leq \nu^*$ ($d \lambda_i > \nu^*$) in Class II, or $d \lambda_i \leq \nu^*_1$ ($d \lambda_i > \nu^*_1$) in Class III. In other words, the subspace ${\cal T}^+(d)$ [${\cal T}^- (d)$] contains only expanding (contracting) directions, and therefore the projection on it of the synchronization error $\delta{\mathbf{X}}$ will exponentially increase (shrink) in size.

The second step consists in taking note that, if one constructs the matrix $V$ having as columns the eigenvectors $ \mathbf{v}_i = (v_{i1}, v_{i2}, ..., v_{iN})$, that is

\begin{equation}
	\label{matrixV}
	V =
	\begin{pmatrix}
		v_{11} & v_{21} & \cdots & v_{N1} \\
		v_{12} & v_{22} & \cdots & v_{N2} \\
		\vdots  & \vdots  & \ddots & \vdots  \\
		v_{1N} & v_{2N} & \cdots & v_{NN}
	\end{pmatrix},
\end{equation}
then the rows of matrix (\ref{matrixV}) provide an orthonormal basis of $\mathbb{R}^N$ as well, since $V^T V = \mathbb{I}$ implies $V^T=V^{-1}$, and hence also $V V^T= \mathbb{I}$.

Therefore, one can now examine the eigenvectors component-wise and, for each eigenvector $\mathbf{v}_i$, define the following matrix

\begin{equation*}
	\label{matrixelambdai}
	E_{\lambda_i} =
	\begin{pmatrix}
		(v_{i1}- v_{i1})^2  & (v_{i2}-v_{i1})^2 & \cdots & (v_{iN}-v_{i1})^2 \\
		(v_{i1}- v_{i2})^2  & (v_{i2}-v_{i2})^2 & \cdots & (v_{iN}-v_{i2})^2 \\
		\vdots  & \vdots  & \ddots & \vdots  \\
		(v_{i1}- v_{iN})^2  & (v_{i2}-v_{iN})^2 & \cdots & (v_{iN}-v_{iN})^2
	\end{pmatrix}.
\end{equation*}

Furthermore, following the same sequence which is progressively unfolding ${\cal T}$, one recursively defines the following set of matrices $S_n$

\begin{equation*}
	\label{matrixsn}
	\begin{aligned}
		&S_N= E_{\lambda_N},\\
		&S_{N-1} = S_{N} + E_{\lambda_{N-1}},\\
		&... \ ,\\
		&S_2= S_3 + E_{\lambda_2}, \\
		&S_1 = S_2 + E_{\lambda_1}.
	\end{aligned}
\end{equation*}

It is worth discussing a few properties of such matrices. First of all, as $\mathbf{v}_1$ is aligned with ${\cal M}$, all its components are equal, and therefore
$E_{\lambda_1}=0$ and $S_1=S_2$. Second, all the $E$-matrices, and thus all the $S$-matrices, are symmetric, non negative and have all diagonal entries equal to zero. In fact, the off diagonal $ij$ elements of the matrix $S_n$ ($n=1,...,N$) are nothing but the square of the norm of the vector obtained as the difference between the two vectors defined by rows $i$ and $j$ of matrix~(\ref{matrixV}), truncated to their $n$ last components.
As so, the maximum value that any entry $(ij)$ may have in the matrices $S_n$ is 2, which corresponds to the case in which such two vectors are orthonormal.
In particular, all off-diagonal entries of $S_2=S_1$ are equal to 2.

The third conceptual step consists in considering the fact that the Laplacian matrix ${\cal L}$ uniquely defines $G$, and as so any clustering property of the network $G$ has to be reflected into a corresponding spectral feature of ${\cal L}$~\cite{MacArthur2009,Sanchez2020}.
In this paper, we can prove rigorously that the synchronized clusters emerging during the transition of $G$ can be associated to a localization of a group of the Laplacian eigenvectors on the clustered nodes.
Namely, let us first define that a subset $ {\cal S} \subseteq \{ \mathbf{v}_2, \mathbf{v}_3, ..., \mathbf{v}_N \} $ consisting of $k-1$ eigenvectors forms a
spectral block localized at nodes $\{ i_1, . . . , i_k \}$ if
\begin{itemize}
	\item each eigenvector belonging to $ {\cal S}$ has all entries (except $i_1, i_2, . . . , i_k$) equal to 0;
	\item for each other eigenvector $\mathbf{v}$ not belonging to $ {\cal S}$, the entries $i_1, i_2, . . . , i_k$ are all equal i.e., $v_{i_1}= v_{i_2} = . . . = v_{i_k}$.
\end{itemize}
Moreover, all eigenvectors $\{\mathbf{v}_2, \mathbf{v}_3, ..., \mathbf{v}_N \}$ are orthonormal to $\mathbf{v}_1$, and therefore
the sum of all their entries must be equal to 0.

The main theoretical result underpinning our study is the Theorem stated below.

\vskip 0.5truecm

{\it Theorem}. The following two statements are equivalent:
\begin{enumerate}
	\item All $k$ nodes belonging to a cluster defined by the indices $\{i_1,\ldots, i_k\}$ have the same connections with the same weights with all other nodes not belonging to the cluster, i.e. for any $p, q \in \{i_1,\ldots, i_k\}$ and $j\not\in\{i_1,\ldots, i_k\}$ one has ${\cal L}_{p j}={\cal L}_{q j}$.
	
	\item There is a spectral block $ {\cal S}$ made of $k-1$ Laplacian's eigenvectors localized at nodes $\{ i_1, . . . , i_k \}$.
\end{enumerate}

\vskip 0.5truecm

A group of nodes satisfying condition (1) of the theorem is also called an external equitable cell~\cite{barambara}.

The reader interested in the mathematical proof of the theorem is referred to our Supplementary Information. We here concentrate, instead, on the main concepts involved.
Conceptually,  the first statement of the Theorem is tantamount to assert that the nodes belonging to a given cluster are indistinguishable to the eyes of any other node of the network, but puts no constraints on the way such nodes are connected among them within the cluster. Therefore, fulfillment of the statement is realized by (but not limited to) the case of a network's symmetry orbit.
In other words, the first statement of the theorem says that the clustered nodes receive an equal input from the rest of the network, and therefore (for the principle that a same input will eventually - i.e. at sufficiently large coupling -imply a same output) they may synchronize independently on the synchronization properties of the rest of the graph.
Therefore, the intermediate structured states emerging in the path to synchrony of a network are more general than the graph's symmetry orbits, but more specific than the graph's equitable partitions.

However, the most relevant consequence of the theorem is that the localization of the eigenvectors' components implies that the  matrices $S_n$ may actually display entries equal to 2 also for $n$ strictly larger than 2!
Indeed, the ($ij$) entry of the matrices $S_n$ is just equal to
$$
\sum_{k=n}^{N} (v_{kj}-v_{ki})^2.
$$
Now, suppose that $\mathcal{S}$ is a spectral block localized at $i$, $j$ and some other nodes. Then, if $\mathbf{v}_k$ does not belong to $\mathcal{S}$, the term $(v_{kj}-v_{ki})^2=0$, and one has therefore that the $(ij)$ entry of $S_n$ has contributions only from those eigenvectors $\mathbf{v}_n$ belonging to $\mathcal{S}$.
Now, all the times that a localized spectral block ${\cal S}$ is contained in the set of those eigenvectors generating ${\cal T}^-(d)$ the corresponding cluster of nodes will emerge as stable synchronization cluster, because the tangent space of the corresponding synchronized solution (where synchrony is limited to those specific nodes) can be fully disentangled from the rest of ${\cal T}$ and will consist, moreover, of only contractive directions.

\subsection*{Complete description of the transition}

\begin{figure*}
	\includegraphics[width=1\linewidth]{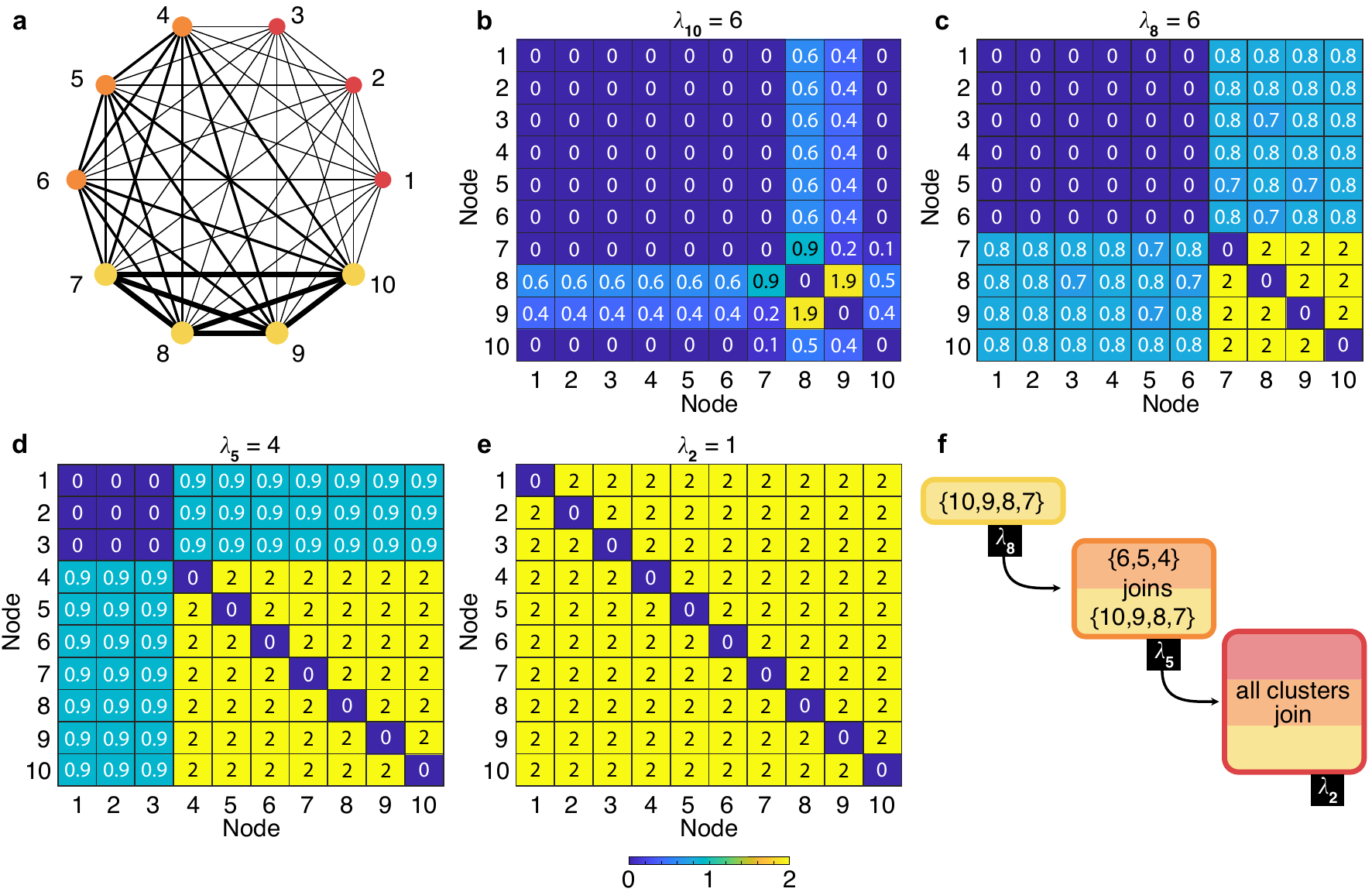}
	\caption{{\bf Predicting the transition to synchronization.} (a)  An all-to-all connected, symmetric, weighted graph of $N=10$ nodes is considered. The graph is endowed with three symmetry orbits: the one composed by the red nodes $\{ 1,2,3 \}$, the one made of the orange nodes $\{ 4,5,6 \}$, and the one made of the four yellow nodes $\{ 7,8,9,10 \}$. In the sketch, the widths of the links are proportional to the corresponding weights, and the sizes of the nodes are proportional to the corresponding strengths. (b) The entries of $S_{10}$, which corresponds to $\lambda_{10}=6$. (c) $S_8$ (associated to $\lambda_8=6$), where the entries equal to 2  clearly define a cluster formed by nodes $\{ 7,8,9,10 \}$. The first predicted event in the transition then consists in the synchronization of such nodes at $d_1= \nu^*/\lambda_8=\nu^*/6$. (d)  $S_5$ (related to $\lambda_5=4$), where additional entries become equal to 2, indicating a second foreseen event in which nodes $\{ 4,5,6 \}$ join the existent synchronization cluster at $d_2= \nu^*/\lambda_5=\nu^*/4$. (e)  $S_2$ (corresponding to $\lambda_2=1$) where it is seen that nodes $\{ 1,2,3 \}$ also join the existing synchronized cluster at $d_3= \nu^*/\lambda_2=\nu^*$ in a third predicted event where complete synchronization of the network takes place. (f) The expected events (and their exact sequence) occurring in the path to synchrony of the network's architecture depicted in panel (a). The bar at the bottom of the Figure gives the color code used in panels (b-e) for matrices' entries.}
\label{fig:prediction}
\end{figure*}

By exploiting the outcomes of the three conceptual steps described above, an extremely simple (and computationally low demanding) technique can then be introduced, able to monitor and track localization of eigenvectors along the transition, and therefore to describe the path to synchronization.

The method consists in the following steps:
\begin{itemize}
	\item given a network $G$, one considers the Laplacian matrix ${\cal L}$, and extracts its $N$ eigenvalues $\lambda_i$ (ordered in size) and the corresponding eigenvectors $\{ \mathbf{v}_i \}$;
	\item one then calculates the $N$ matrices $E_{\lambda_i}$ and $S_n$ ($i=1,...,N; n=1,...,N$);
	\item one inspects the matrices $S_n$ in the same order with which the Laplacian's eigenvalues (when multiplied by $d$) crosses the critical point $\nu^*$ (i.e., $N, N-1, N-2, ..., 2, 1$), and looks for entries which are equal to 2;
	\item when, for the first time in the sequence (say, for index $p$) an entry in matrix $S_p$ is (or multiple entries are) found equal to 2, a prediction is made that an event will occur in the transition: the cluster (or clusters) formed by the nodes with labels equal to those of the found entry (entries) will synchronize at the coupling strength value $\nu^*/\lambda_p$. The inspection of matrices $S_n$ then continues, focusing only on the entries different from those already found to be 2 at level $S_p$;
	\item after having inspected all $S_n$ matrices, one obtains the complete description of the sequence of events occurring in the transition, with the exact indication of all the values of the critical coupling strengths at which each of such events is occurring. By events, we here mean either the formation of one (or many) new synchronized cluster(s), or the merging of different clusters into a single synchronized one.
\end{itemize}

Once again, we here remark that our results and methods are valid under the approximation that all cluster-synchronous solutions are well described by the synchronization state that each node in the cluster would display in the complete synchronization scenario i.e., when the entire network synchronizes. In the following, we will demonstrate the accuracy of such an approximation with respect to synthetic and real-world networks.
We begin with illustrating the method in a simple case,  in order for the reader to have an immediate understanding  of the consequences of the various steps that have been discussed so far.
For this purpose, we consider the network sketched in panel (a) of Fig.~\ref{fig:prediction}, which consists of an all-to-all connected, symmetric, weighted graph of $N=10$ nodes (see the Supplementary Information for the adjacency matrix of the network). By construction, the graph is endowed with three symmetric orbits: the first being composed by the pink nodes $1,2$ and $3$, the second containing the blue nodes $4,5$ and $6$, and the third being made of the four green nodes $ 7,8,9$ and $10$. The 10 eigenvalues of the Laplacian matrix, when ordered in size, are $\{ 0,1,1,1,4,4,4,6,6,6 \}$.

After calculations of the corresponding eigenvectors, the matrices $E_{\lambda_i}$ and $S_n$ are evaluated. Then, one starts inspecting matrices $S_n$ in the reverse order of
the size of the corresponding eigenvalues. Panel (b) of Fig.~\ref{fig:prediction} shows $S_{10}$, which corresponds to $\lambda_{10}=6$, and it can be seen that there are no entries equal to 2 in such a matrix. Nor entries equal to 2 are found in $S_{9}$ (not shown). However, when inspecting $S_8$ (which corresponds to $\lambda_8=6$), one immediately identifies [panel (c) of Fig.~\ref{fig:prediction}]  many entries equal to 2, which clearly define a cluster formed by nodes $7,8,9$ and $10$. A prediction is then made that the first event observed in the transition will be the synchronization of such nodes in a cluster, occurring at $d_1= \nu^*/\lambda_8=\nu^*/6$.

Then, one continues inspecting the matrices $S_n$ and concentrates only on all the other entries. No further entry is found equal to 2 in $S_7$, nor in $S_6$. When scrutinizing $S_5$ [panel (d) of Fig.~\ref{fig:prediction}] one sees that other entries becomes equal to 2, and they indicate that the cluster formed by nodes $4,5$ and $6$ will merge at $d_2= \nu^*/\lambda_5=\nu^*/4=\frac{3}{2} d_1$ with the already existing group of synchronized nodes, forming this way a larger synchronized cluster.
No further features are observed in $S_4$ and $S_3$, while the analysis of $S_2$ [panel (e) of Fig.~\ref{fig:prediction}] reveals that at $d_3= \nu^*/\lambda_2=\nu^*=4 d_2$ also nodes $1,2$ and $3$ join the existing cluster, determining the setting of the final synchronized state, where all nodes of the network evolve in unison.

Panel (f) of Fig.~\ref{fig:prediction} schematically summarizes the predicted sequence of events: if a dynamical system is networking with the architecture of panel (a) of Fig.~\ref{fig:prediction}, its path to synchrony will first (at $d=d_1$) see the formation of the synchronized cluster $\{ 7,8,9,10 \}$, then (at $d=d_2$) nodes $\{ 4,5,6 \}$ will join that cluster, and eventually (at $d=d_3$) all nodes will synchronize.
Already at this stage it should be remarked that all predictions made are totally independent on $\mathbf{f}$ and $\mathbf{g}$:  changing the dynamical system $\mathbf{f}$ operating on each node and/or the output function $\mathbf{g}$ will result in the same sequence of events. The only difference will be that the estimated values $d_i$ at which the $i^{th}$ event will occur will be, under the adopted approximation, rescaled with the corresponding value of $\nu^*$.

\begin{figure*}
	\includegraphics[width=1\linewidth]{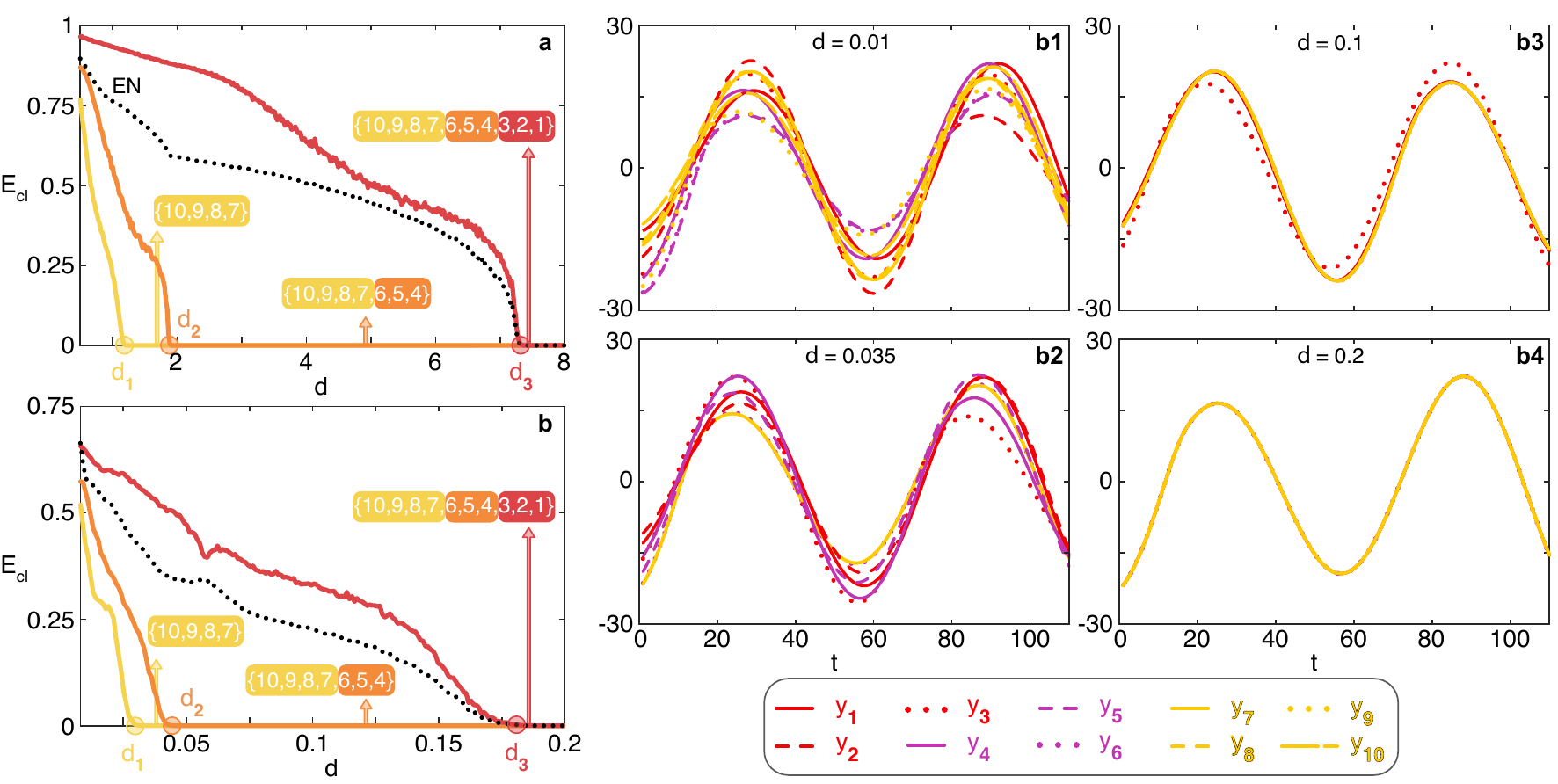}
	\caption{{\bf The numerical verification of the predicted transition.} Panels (a,b): The normalized synchronization errors $E_{cl}$ (see text for definition) as a function of $d$, for the Lorenz  [panel (a)] and the R\"ossler [panel (b)] case. Data refers to ensemble averages over 500 different numerical simulations of the network sketched in panel (a) of Fig.~\ref{fig:prediction}. Cluster 1 (yellow line) is formed by nodes $\{ 7,8,9,10 \}$, Cluster 2 (orange line) is formed by nodes $\{ 4,5,6 \}$, and Cluster 3 (red line) is formed by nodes $\{ 1,2,3 \}$. The black dotted line refers to the synchronization error of the entire network (EN). In both panels it is seen that the expected sequence of events taking place during the transition is verified. Furthermore, the values $d_1= 7.322/6=1.220$ ($d_1= 0.179/6=0.0298$), $d_2= 7.322/4=1.8305$ ($d_2= 0.179/4=0.04475$) and $d_3= 7.322$ ($d_3= 0.179$) are marked in the horizontal axis respectively with a yellow, orange and red filled dot in panel (a) (in panel (b)), indicating how accurate are the predictions and approximations made on the corresponding critical values for the coupling strength. For each interval, the arrow points to the composition of the synchronized cluster that is being observed, once again in perfect harmony with the predictions made. Finally, we have verified that no extra synchronization features emerge during the transition, other than those explicitly foreseen in Fig.~\ref{fig:prediction}. Panels (b1-b4): Temporal snapshots illustrating the evolution of the $y$ variable of each of the 10 network's nodes (see color code at the bottom of the four panels) during the transition to synchronization reported in panel (b). At $d=0.01$ (panel b1) the nodes display a fully uncorrelated dynamics. At $d=0.035$ (panel b2) the yellow nodes (7,8,9,10) are clustered and display a synchronous motion, whereas all other nodes feature a uncorrelated dynamics. At $d=0.1$ (panel b3) the violet nodes (4,5,6) have joined the clustered evolution, while nodes (1,2,3) remains unsynchronized. Finally, at $d=0.2$ (panel b4), all network's nodes are synchronized.}
	\label{fig:numerical}
\end{figure*}

In order to show how factual is our prediction, we monitored the transition in numerical simulations of Eq.~(\ref{eq:general}), by using the Laplacian matrix of the network of panel (a) of Fig.~\ref{fig:prediction}, and by considering two different dynamical systems: the R\"ossler~\cite{roessler} and the Lorenz system~\cite{lorenz} with proper output functions.
Precisely, the case of the R\"ossler system corresponds to $\mathbf{x}\equiv(x,y,z), \ \mathbf{f}(\mathbf{x}) = (-y-z, x+ay, b+z (x-c))$ and $\mathbf{g}(\mathbf{x}) = (0,y,0)$, while the case of the Lorenz system corresponds to $\mathbf{x}\equiv(x,y,z), \ \mathbf{f}(\mathbf{x}) = (\sigma(y-z), x (\rho-z) -y, xy-\beta z)$ and $\mathbf{g}(\mathbf{x}) = (x,0,0)$. When parameters are set to $a=0.1, b=0.1, c=18, \sigma=10, \rho=28, \beta=2$, both systems are chaotic and belong to Class II, with $\nu^*=7.322$ for the Lorenz case and $\nu^*= 0.179$ for the R\"ossler case.
It is worth noticing that we here concentrate on chaotic systems, and do not consider instead periodic dynamics. The reason is that, in the MSF,
$\nu=0$ corresponds to $\lambda_1=0$ i.e., to the eigenvector $\mathbf{v}_1$ aligned with the synchronization manifold. Therefore, $\Lambda(0)$ is the maximum Lyapunov exponent of the isolated system which is equal to 0 for a periodic dynamics, leading to the vanishing of the critical value $\nu^*$ as well as of the threshold for complete synchronization. In turn, this would imply that intermediate clusters could not be observed as the entire network would synchronize for any infinitesimal coupling strength.

In order to properly quantify synchronization, one calculates the synchronization error over a given cluster, as

\begin{equation}
	\label{error}
	E_{cl}= \left<   \left( \frac{1}{N_{cl}}  \sum_{i} \mid \mathbf{x}_i-\bar{\mathbf{x}}_{cl} \mid^2   \right)^{\frac{1}{2}}   \right>_{\Delta T},
\end{equation}

\noindent where the sum runs over all nodes $i$ forming the cluster,  $<...>_{\Delta T}$ stands for a temporal average over a suitable time span $\Delta T$,  $\bar{\mathbf{x}}_{cl}$ is the average value of the vector $\mathbf{x}$ in the cluster, and $N_{cl}$ is the number of nodes in the cluster. In addition,
the synchronization error is normalized to its maximum value, so as to range from 1 to 0.

The results are reported in Fig.~\ref{fig:numerical}, where the normalized synchronization errors are reported as a function of $d$ (for both the Lorenz and the R\"ossler case) for the entire network (black dotted line) and for each of the three orbits in the network  (yellow, orange and red lines). It is clearly seen that all predictions made are fully satisfied. Panels (b1-b4) of Fig.~\ref{fig:numerical} report temporal snapshots of the dynamics of each of the 10 network's nodes, and illustrate visually the different collective phases which are observed during the transition to synchronization reported in Fig.~\ref{fig:numerical}b.

Looking at Fig.~\ref{fig:numerical}, it must be remarked that the transition to synchronization is identical for the R\"ossler and Lorenz systems, with the only difference being the different values $\nu^*$  (0.179 for R\"ossler and 7.322 for Lorenz).
In other words, the horizontal axis in Fig.~\ref{fig:numerical}b just corresponds to that of Fig.~\ref{fig:numerical}a multiplied by the ratio 0.179/7.322 of the two critical values. This depends on the fact that the set of clusters that are synchronizing during the transition, the nodes that compose them, and the order in which clusters synchronize are, under the approximation adopted in our study, completely independent on the specific dynamical system that is ruling the evolution of the network’s nodes (as these features only depend instead on the spectrum of the Laplacian matrix, and consequently only on the topology of the graph).
This is a strong result, because it implies that in all practical situations (i.e., when there are uncertainties in the model parameters, or even when the knowledge of the dynamics of the nodes is fully unavailable) the entire cluster sequence forming the transition to synchronization can still be predicted.

\subsection*{Synthetic networks of large size}

\begin{figure}
	\includegraphics[width=1\linewidth]{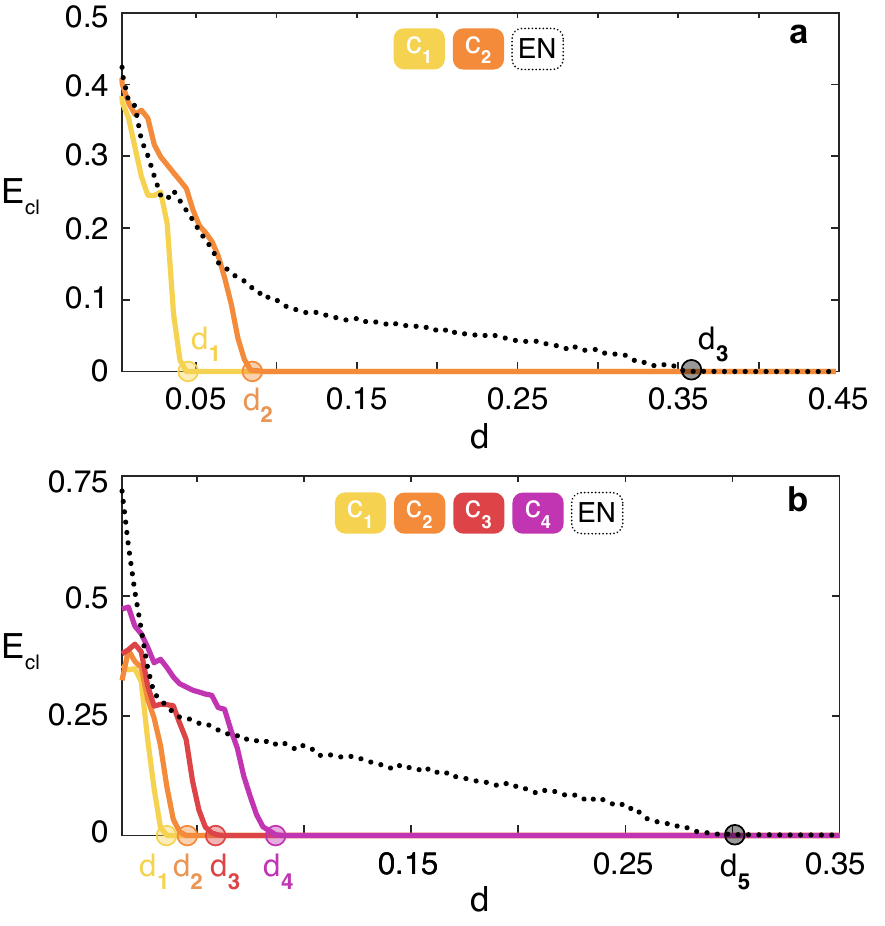}
	\caption{{\bf Applications to large size synthetic networks.} $E_{cl}$ (see text for definition) vs. $d$, for the  R\"ossler system (see the differential equations in the text). Data in panel (a) [in panel (b)] refer to ensemble averages over 50 (150) different numerical simulations of the graph $G_1$ ($G_2$) described in the main text. In both panels, the legend sets the color code for the curves corresponding to each of the existing clusters $C_i$ and to the Entire Network (EN). Once again, the two predicted transitions are verified. (a) Cluster 1 synchronizes at $d_1=0.179/4=0.04475$ (marked with a yellow dot), Cluster 2 synchronizes at $d_2=0.179/2=0.0895$ (orange dot), and the entire network synchronizes at $d_3=0.179/0.4758= 0.376$ (black dot), as predicted. (b) The cluster with  30 nodes synchronizes at $d_1= 0.179/5= 0.0358$ (yellow dot), the cluster with 100 nodes at $d_2=0.179/4=0.04475$ (orange dot), the cluster with 300 nodes at $d_3=0.179/3=0.0597$ (red dot), the cluster with 1,000 nodes at $d_4=0.179/2=0.0895$ (violet dot), and the entire network at $d_5=0.179/0.6025=0.297$ (black dot).}
	\label{fig:synthetic1000}
\end{figure}

The next step is to test the accuracy of the method in the case of large size graphs. To this purpose, we consider 2 networks that were synthetically generated in Ref.~\cite{noinoi1},  where a specific method of generating graphs endowed with desired clusters was introduced that initially considers ensembles of disconnected nodes and then connects each one of the nodes in each of such ensembles to the same group of nodes of an external core network, this way forming clusters containing nodes of equal degree. While we refer the reader to Ref.~\cite{noinoi1} for the full details of the generating algorithm, we here limit ourselves to report the main attributes of the considered networks.
The first network $G_1$ is of size $N= 1,000$ nodes, and is endowed with two symmetry orbits that generate two distinct clusters of sizes $20$ nodes (Cluster 1) and $10$ nodes (Cluster 2), respectively.
The second network $G_2$ is made of $N=10,000$ nodes and is endowed with four symmetry orbits generating four clusters of sizes that span more than an order of  magnitude (Clusters 1 to 4 contain, respectively, 1,000, 300, 100, and 30 nodes).

Following the expectations which are detailed in the theorems of our Supplementary Information, the calculation of the Laplacian eigenvalues of $G_1$ allows one to identify a first group of 19 degenerate eigenvalues $\lambda_{277,278,...,295}=4$ and a second group of 9 degenerate eigenvalues $\lambda_{59,60,...,67}=2.$ The analysis of the matrices $S_n$ then reveals that the first event in the transition is the synchronization of Cluster 1  at $d_1=\nu^*/4$, followed by the synchronization of Cluster 2 at $d_2=\nu^*/2=2 d_1$, this time in a state which is not synchronized with cluster 1, thus determining an overall state where two distinct synchronization clusters coexist. Eventually, the entire network synchronizes at $d_3= \nu^*/\lambda_2=\nu^*/0.4758$.

In the case of $G_2$, four blocks of degenerate eigenvalues are indeed found in correspondence with the four clusters. The transition predicted by inspection of the matrices $S_n$ is characterized by the sequence of five events. First, the cluster with 30 nodes synchronizes at $d_1= \nu^*/5$. Second, at $d_2=\nu^*/4$, the cluster with 100 nodes synchronizes. At $d_3=\nu^*/3$ ($d_4=\nu^*/2$) also the cluster with 300 nodes (with 1,000 nodes) synchronizes. The four clusters evolve in four different synchronized states. Eventually, at $d_5=\nu^*/0.6025$ the entire network synchronizes.

We then simulated the R\"ossler system on $G_1$ and $G_2$ and reported the results in Fig.~\ref{fig:synthetic1000}, which are actually fully confirming the predicted scenarios.

\subsection*{Real-world and heterogeneous networks}

We move now to show three applications to real-world networks.

For the first application, we have considered the network of the US power grid.  The PowerGrid network consists of 4,941 nodes and 6,594 links, and it is publicly available at https://toreopsahl.com/datasets/\#uspowergrid. It was already the object of several studies in the literature, the first of which was the celebrated 1998 paper on small world networks~\cite{smallworld}. In the PowerGrid network, nodes are either generators, transformers or substations forming the power grid of the Western States of the United States of America, and therefore they have a specific geographical location.
Recently, it was proven that a fraction of $16.7\%$ of its nodes are forming non trivial clusters corresponding to symmetry orbits which are small in size, due to the geographical embedding of the graph~\cite{Sanchez2020}.

Application of our method detects that the synchronization transition is made of a very well defined sequence of events, which involves the emergence of 381 clusters.
The clusters that are being formed are all small in size, because of the constraints made by the geographical embedding.
In particular, 310 clusters contain only 2 nodes, 49 clusters are made of 3 nodes, 14 clusters are formed by 4 nodes, 4 clusters have 5 nodes, 2 clusters appear with 6 nodes,
1 cluster has 7 nodes, and 1 cluster is made of 9 nodes. The overall number of network's nodes which get clustered during the transition is 871. A partial list of these clusters (spanning about one order of magnitude in the size of the corresponding eigenvalues) and the various values of coupling strengths at which the different events are predicted is available in Table 1 of the Supplementary Information.

We have then simulated the R\"ossler system on the PowerGrid network, and monitored the synchronization error on 6 specific clusters (highlighted in red in the list of Table 1 of the Supplementary Information) that our method foresees to emerge during the path to synchrony in a well established sequence and at well specific values of the coupling strength ($d_1=0.04475$, $d_2=0.0596$, $d_3= 0.0895$, $d_4=0.1294$, $d_5=0.179$, $d_6=0.3056$) . The values $d_1,..., d_6$ are explicitly calculated in the Supplementary Information, and are marked as filled dots in the horizontal axis of Fig.~\ref{fig:real} with the same colors identifying the corresponding clusters.
Looking at Fig.~\ref{fig:real} (a)  one sees that, once again, the observed sequence of events matches the predicted one, with an excellent fit with the values $d_1, ..., d_6$, thus validating ex-post the approximation adopted in our study.

We also tested how robust is the predicted scenario against possible heterogeneities present in the network. To this purpose, one again simulates the R\"ossler system on the PowerGrid network, but this time one distributes randomly different values of the parameter $b$ to different network's nodes. Precisely, for each node $i$ of the network, one uses a parameter $b_i$ in the R\"ossler equations which is randomly sorted from a uniform distribution in the interval $[0.1-\epsilon, 0.1+\epsilon]$, with the extra parameter  $\epsilon$ that now quantifies the extent of heterogeneity in the graph. The results are reported in panel (b) of Fig.~\ref{fig:real} for $\epsilon=0.01$, corresponding to 10 \% of the value ($b=0.1$) which was used for all nodes in the case of identical systems, thus representing a case of a rather large heterogeneity.

\begin{figure}
	\includegraphics[width=1\linewidth]{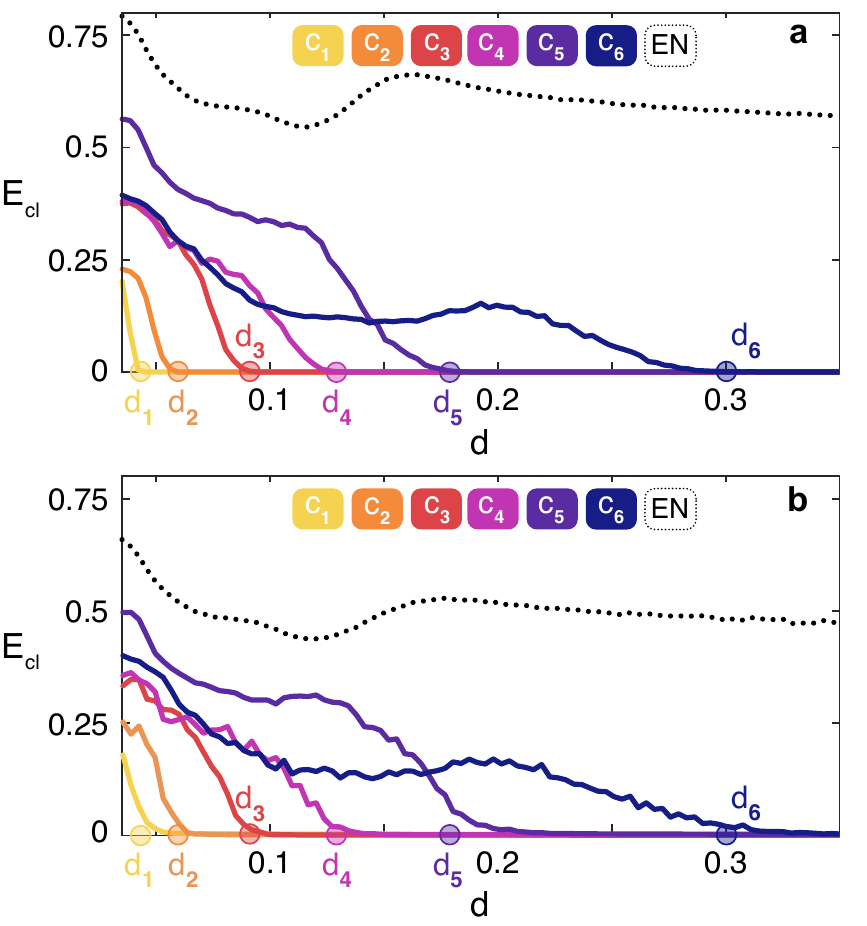}
	\caption{{\bf Applications to the PowerGrid network.} $E_{cl}$ (see text for definition) vs. $d$, for the  R\"ossler system (see the differential equations in the text).  Data in panel (a) [in panel (b)] refer to ensemble averages over 850 (200) different numerical simulations of the PowerGrid network. As in Fig.~\ref{fig:synthetic1000}, the legends of both panels set the color code for the curves corresponding to each of the reported clusters $C_i$ and to the Entire Network (EN). Panel (a) reports the case of identical R\"ossler systems, and the error of 6 specific clusters is plotted (see Table 1 of the Supplementary Information for the composition of each of the 6 clusters $C_i$).  The observed sequence of events perfectly matches the predicted one, with an excellent fit with the values $d_1, ..., d_6$. In panel (b) the effects of heterogeneity in the network are reported. Namely, for each node $i$ of the PowerGrid network, the parameter $b_i$ in the R\"ossler equations is randomly sorted from a uniform distribution in the interval $[0.1-\epsilon, 0.1+\epsilon]$. The curves plotted refer to $\epsilon=0.01$.}
	\label{fig:real}
\end{figure}

It has to be remarked that, when the networked units are not identical, the very same synchronization solution ceases to exist, and therefore it formally makes no sense to speak of the stability of a solution that does not even exist. Indeed, if one selects no matter which ensemble of nodes, the synchronization error never vanishes exactly in the ensemble.
Nonetheless, it is still observed that the values of the normalized synchronization errors fluctuate around zero for some sets (clusters) of network's nodes which, therefore, anticipate the setting of the almost completely synchronized state (wherein all nodes evolve almost in unison).
In Fig.~\ref{fig:real}(b)  it is clearly seen that, while the synchronization errors approach zero at values that are obviously different from those predicted in the case of identical systems, the sequence at which the different clusters emerge during the transition is still preserved. Similar scenarios characterize the evolution of the network also when the heterogeneity is affecting the other two parameters ($a$ and $c$) entering into the equations of the R\"ossler system.

The second (third) application is given with reference to a real-world biological (social) network. Precisely, we consider the Yeast protein-protein interaction network (a dataset made of $N=1,647$ nodes and $E=2,518$ edges~\cite{yu2008}) and the ego-Facebook network (a dataset containing $N=2,888$ nodes and $E=2,981$ edges~\cite{mcauley2012}). The results are reported in Fig.~\ref{fig:biosocio}, and refer to ensemble averages over 100 different numerical simulations of identical R\"ossler systems (same parameters and initial conditions as in the case reported in Fig.~\ref{fig:real}) coupled by the structure of connections of the two graphs.
For the biological network, it is found that the transition to synchronization includes 188 clusters, and Fig.~\ref{fig:biosocio}a reports the synchronization error of 4 of them: $C_1$ and $C_2$ (which are both 2 nodes clusters), $C_3$ (a cluster containing 3 nodes), $C_4$ (a cluster of 7 nodes).
For the social network, 18 clusters are found during the transition to synchronization, and also in this case Fig.~\ref{fig:biosocio}b reports the synchronization error of 4 of them: $C_1$ and $C_2$ (which are both made of 2 nodes), $C_3$ (a cluster containing 5 nodes), $C_4$ (a subset of 3 nodes - out of a cluster of 280 nodes - which is forming a cluster by itself).
In both cases, one easily sees that the numerically observed sequence of events perfectly matches the predicted one, with an excellent fit of the critical coupling strength values $d_1, d_2, d_3,$ and $d_4$.

\begin{figure}
	\includegraphics[width=1\linewidth]{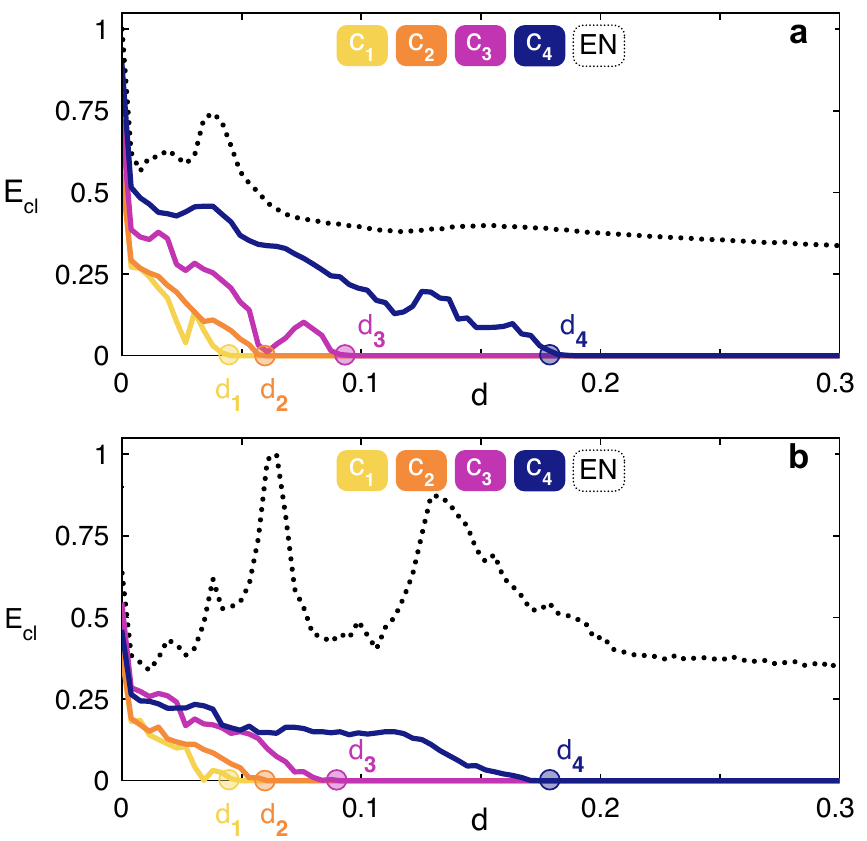}
	\caption{{\bf Application to biological and social networks.} Synchronization error for the considered clusters $C_1$, $C_2$, $C_3$ and $C_4$ (see the color-code at the top of the panel) for the Yeast protein-protein interaction network~\cite{yu2008} (panel a), and the ego-Facebook network~\cite{mcauley2012} (panel b). In both panels, we report also the synchronization error of the entire network (EN, black dotted line). }
	\label{fig:biosocio}
\end{figure}

\section*{Discussion}

In conclusion, our work provides an approximated, yet very accurate, description of the transition to synchronization in a network of identical systems.
We unveil, indeed, that the path to synchronization is made of a sequence of events, each of which can be identified as either the nucleation of one (or several) cluster(s) of synchronized nodes, or to the merging of multiple synchronized clusters into a single one, or to the growth of an already existing synchronized cluster which enlarges its size. In our study, all nodes in a cluster have the same connections (and the same weights) with nodes not belonging to the cluster, and therefore they receive the same dynamical input from the rest of the network.

By combing methodologies borrowed from stability of nonlinear systems with tools of algebra and symmetries, and under the approximation that the systems' trajectories in all cluster-synchronous states do not substantially differ from those featured at complete synchronization, we have been able to introduce
a simple and effective method able to provide the complete prediction of the sequence of such events, to identify  which graph's node is belonging
to each of the emergent clusters, and to give a forecast of the critical coupling strength values at which
such events are taking place.

While the local node dynamics and the coupling function are entirely responsible for the specific synchronization class of the Master Stability Function, the sequence of events that are taking place during the transition to synchronization depends, instead, only on the graph's structure and, more precisely, on the only knowledge of the full spectrum of eigenvalues and eigenvectors of the Laplacian matrix.

Our method does not require an a-priori knowledge of the network's symmetries.
We moreover clarify once forever the intimate nature of the clusters that are being formed along the transition path: they are formed by those nodes which are indistinguishable at the eyes of any other network's vertex. This implies that nodes in a synchronized cluster have the same connections (and the same weights) with nodes not belonging to the cluster, and therefore they receive the same dynamical input from the rest of the network.
Synchronizable clusters in a network are therefore subsets more general than those defined by the graph's symmetry orbits, and at the same time more specific than those described by equitable partitions.

Finally, our work gave evidence of several extensive numerical simulations with both synthetic and real-world networks, and demonstrates how high is the accuracy of our predictions. Remarkably, the synchronization scenario in heterogeneous networks (i.e., networks made of non-identical units) preserves the predicted cluster sequence along the entire synchronization path.

Our results, therefore, call for a lot of  applications of general interest in nonlinear science, ranging from synthesizing networks equipped with desired cluster(s) and modular behavior, until predicting the parallel (clustered) functioning of real-world networks from the analysis of their structure.

\section*{Methods}

In what follows we consider a connected weighted undirected graph $G$ with $N$ nodes and uniquely identified by its adjacency matrix $A$ and its Laplacian matrix ${\cal L}$. Furthermore, we will call  $\lambda_1=0<\lambda_2\leq\lambda_3\ldots\leq\lambda_N$ the (ordered in size) real and nonnegative eigenvalues of ${\cal L}$, and $\mathbf{v}_1= \frac{1}{\sqrt{N}}(1,1,1,\ldots,1)^T, \mathbf{v}_2,\mathbf{v}_3,\ldots,\mathbf{v}_N$ corresponding orthonormal eigenvectors.

\subsection*{The MSF for the network's synchronous solution}

Let us recall here that the equation governing a generic ensemble of $N$ identical dynamical systems interplaying over a network $G$ is Eq.(\ref{eq:general})  i.e., $
\dot{\mathbf{x}}_i = \mathbf{f}(\mathbf{x}_i ) - d \sum_{j=1}^{N} {\cal L}_{ij} \: \mathbf{g}(\mathbf{x}_{j})$,
where $\mathbf{x}_i(t)$ is a $m$-dimensional vector state describing the dynamics of each node $i$, $\mathbf{f}: \mathbb{R}^m \longrightarrow \mathbb{R}^m$ is the local (identical in all units) dynamical flow describing the evolution of the isolated systems, $d$ is a real-valued coupling strength, ${\cal L}_{ij}$ is the $ij$ entry of the Laplacian matrix, and  $\mathbf{g}: \mathbb{R}^{m} \longrightarrow \mathbb{R}^m$ is an output function describing the functional way through which units interplay.

${\cal L}$ is a zero-row matrix, a property which, in turn, guarantees existence and invariance of the synchronized solution
$$\mathbf{x}_s(t) = \mathbf{x}_1(t)  = \ldots = \mathbf{x}_N(t).$$
In order to study stability of such a solution, one considers perturbations $\delta \mathbf{x}_i = \mathbf{x}_i - \mathbf{x}_s$ for $i=1, ..., N$, and performs linear stability analysis of Eq.~(\ref{eq:general}). The result are the following equations:
\begin{equation}\label{eq:deltaxi}
	\begin{aligned}
		\dot{\delta \mathbf{x}}_i & = J\mathbf{f}(\mathbf{x}_s ) {\delta \mathbf{x}}_i  - d \sum_{j=1}^{N} {\cal L}_{ij} \: J \mathbf{g}(\mathbf{x}_{s}) {\delta \mathbf{x}}_j ,
	\end{aligned}
\end{equation}
where $J \mathbf{f} (\mathbf{x}_s)$ and $J \mathbf{g}(\mathbf{x}_s)$ are, respectively, the $m \times m$ Jacobian matrices of the flow and of the output function, both evolving in time following the synchronization solution's trajectory.

One can, in fact, consider the global error $\delta{\mathbf{X}} \in \mathbb{R}^{Nm} \equiv (\delta{\mathbf{x}}_1, \delta{\mathbf{x}}_2, ... , \delta{\mathbf{x}_N})^T$ around the synchronous state, and recast Eqs.~(\ref{eq:deltaxi}) as

\begin{equation}\label{eq:deltax}
	\begin{aligned}
		\dot{\delta \mathbf{X}} & = \left[ {\mathbb I} \otimes J\mathbf{f}(\mathbf{x}_s )  - d  {\cal L} \otimes J \mathbf{g}(\mathbf{x}_{s}) \right] \delta \mathbf{X} ,
	\end{aligned}
\end{equation}
where ${\mathbb I}$ is the identity matrix, and $\otimes$ stands for the direct product.

Moreover, ${\cal L}$ is a symmetric, zero row sum, matrix. As so, it is always diagonalizable, and the set of its eigenvectors forms an orthonormal basis of $\mathbb{R}^N$.  The zero row sum property of ${\cal L}$ implies furthermore that $\lambda_1=0$ and that $\mathbf{v}_1= \frac{1}{\sqrt{N}} (1,1,....,1,1)^T$. Therefore, all components of $\mathbf{v}_1$ are equal, and this means that $\mathbf{v}_1$ is aligned, in phase space, to the manifold ${\cal M}$ defined by the synchronization solution, and that an orthonormal basis for the space ${\cal T}$ tangent to ${\cal M}$ is just provided by the set of eigenvectors $\mathbf{v}_2, \mathbf{v}_3, ..., \mathbf{v}_N$. For the synchronization solution to be stable, the necessary condition is then that all directions of the tangent space be contractive.

One can now expand the error $\delta{\mathbf{X}}$ as a linear combination of the eigenvectors $\{ \mathbf{v}_i \}$ i.e.,
$$\delta{\mathbf{X}}= \sum_{i=1}^{N} {\mathbf{v}}_i  \otimes {\mathbf{\eta}}_i.$$
Then, plugging the expansion in Eq.~(\ref{eq:deltax}) and operating the scalar product of both the right and left part of the equation times the eigenvectors $\mathbf{v}_i$, one obtains that the coefficients $\mathbf{\eta}_i \in \mathbb{R}^m$ obey the equations

$$\dot{\mathbf{\eta}_i} = \left[ J \mathbf{f} (\mathbf{x}_s)-d \lambda_i J \mathbf{g}(\mathbf{x}_s) \right] \mathbf{\eta}_i.$$

Notice that the equations for the coefficient ${\mathbf{\eta}_i}$ are variational, and only differ (at different $i's$) for the eigenvalue $\lambda_i$ appearing in the evolution kernel. This entitles one to cleverly separate the structural and dynamical contributions, by introducing a parameter $\nu \equiv d \lambda$, and by studying the $m$-dimensional parametric variational equation
\begin{equation}
	\label{eq:parametric}
	\begin{aligned}
		\dot{\mathbf{\eta}} = \left[ J \mathbf{f} (\mathbf{x}_s)-\nu J \mathbf{g}(\mathbf{x}_s) \right] \mathbf{\eta} = K(\nu) \mathbf{\eta}.
	\end{aligned}
\end{equation}

The kernel $K(\nu)$, indeed, only depends on $\mathbf{f}$ and $\mathbf{g}$ (i.e., on the dynamics), and the structure of the network  is now encoded within  a specific set of $\nu$ values (those obtained by multiplying $d$ times the Laplacian's eigenvalues).

The maximum Lyapunov exponent $\Lambda$ [i.e., the maximum of the $m$ Lyapunov exponents of Eq.~(\ref{eq:parametric})] can then be computed for each value of $\nu$.
The function $\Lambda(\nu)$ is called the Master Stability Function, and only depends on $\mathbf{f}$ and $\mathbf{g}$. At each value of $d$, a given network architecture is just mapped to a set of $\nu \neq 0$ values. The corresponding values of $\Lambda(\nu)$ provide the expansion (if positive) or contraction (if negative) rates in the directions of the eigenvectors $\mathbf{v}_2, \mathbf{v}_3, ..., \mathbf{v}_N$, and therefore one needs all these values to be negative in order for ${\cal M}$ to be attractive in all directions of ${\cal T}$.

Finally, notice that $\nu=0$ corresponds to $\lambda_1=0$ i.e., to the eigenvector $\mathbf{v}_1$ aligned with ${\cal M}$. Therefore, $\Lambda(0)$ is equal to the maximum Lyapunov exponent of the isolated system $\dot{\mathbf{x}} = \mathbf{f}(\mathbf{x})$. In turn, this implies that the Master Stability Function starts with a value which is strictly positive (strictly equal to 0) if the networks units are chaotic (periodic).

The 3 different Classes of systems supported by the Master Stability Function are illustrated in Fig.~\ref{fig:Example}, and largely discussed in the Manuscript.

\subsection*{The stability properties of the clusters' synchronous solution}

It is important to remark that all the above results are formally valid only for the whole network's synchronous solution.
The trajectories followed by the nodes' dynamics in each cluster-synchronous state slightly differ, instead, from those which are followed in the global solution, as they rigorously depends on the quotient network, and therefore on topology, node dynamics, and clusterization. Let us indeed focus on a given cluster $C_l$, and let us call ${\cal C}_l$ the set of indices identifying the nodes that belong to $C_l$. For each node $i$ ($i \in {\cal C}_l$), Eq. (\ref{eq:general}) becomes

\begin{equation}\label{eq:generalcluster}
	\begin{aligned}
		\dot{\mathbf{x}}_i & = \mathbf{f}(\mathbf{x}_i ) - d \sum_{j\in {\cal C}_l} {\cal L}_{ij} \: \mathbf{g}(\mathbf{x}_{j}) - d \sum_{j \notin {\cal C}_l} {\cal L}_{ij} \: \mathbf{g}(\mathbf{x}_{j}) ,
	\end{aligned}
\end{equation}
where the overall coupling is now split into the sum of an intra-cluster term and of a term accounting for the connections of the cluster to the rest of the network.
Eq. (\ref{eq:generalcluster}) can be rewritten as

\begin{equation}\label{eq:generalcluster1}
	\begin{aligned}
		\dot{\mathbf{x}}_i & = \mathbf{f}(\mathbf{x}_i ) - d \sum_{j\in {\cal C}_l} {\cal L}_{ij} \: \mathbf{g}(\mathbf{x}_{j}) + d \sum_{j \notin {\cal C}_l} {\cal A}_{ij} \: \mathbf{g}(\mathbf{x}_{j}) ,
	\end{aligned}
\end{equation}
where ${\cal A}$ is the adjacency matrix. This is because all the elements of the Laplacian matrix considered in the second coupling term are just the opposite of the corresponding terms of the adjacency matrix. The second coupling term is, indeed, limited to $j \notin {\cal C}_l$ and therefore, by definition, it does not contain the diagonal element of the Laplacian ($j=i$) which is instead contained in the intra-cluster coupling term.

We now recall that the main theorem of our study asserts that synchronizable clusters are those formed by nodes which are equally connected to, and receive an equal input from, the rest of the network.  Therefore, as the last term of Eq.~(\ref{eq:generalcluster1}) accounts for the total input received by node $i$ from all nodes that do not belong to the cluster, our theorem ensures that it is a term which is formally independent on $i$.
The cluster synchronous solution is $\mathbf{x}_i(t)  = \mathbf{x}_j(t) = \mathbf{x}_{C_l}(t)$, $\forall i \in {\cal C}_l$ and $\forall j \in {\cal C}_l$ ($j \neq i$), and obeys the equation

\begin{equation}\label{eq:generalclustersynch}
	\begin{aligned}
		\dot{\mathbf{x}}_{C_l} & = \mathbf{f}(\mathbf{x}_{C_l} ) + d \sum_{j \notin {\cal C}_l} {\cal A}_{ij} \: [\mathbf{g}(\mathbf{x}_{j})-  \mathbf{g}(\mathbf{x}_{C_l})].
	\end{aligned}
\end{equation}

This is because the diagonal terms of the Laplacian matrix are equal to the sum of the number of intra-cluster connections and of the number of extra-cluster connections, the latter ones being now incorporated in the remaining coupling term, which once again does not depend on $i$.
Once again, one can consider perturbations $\delta \mathbf{x}_i = \mathbf{x}_i - \mathbf{x}_{C_l}$ ($\forall i \in {\cal C}_l$), and perform linear stability analysis of Eq.~(\ref{eq:generalcluster1}). The result is

\begin{equation}\label{eq:deltaxcluster}
	\begin{aligned}
		\dot{\delta \mathbf{X}} & = \left[ {\mathbb I} \otimes J\mathbf{f}(\mathbf{x}_{C_l} )  - d  {\cal  L} \otimes J \mathbf{g}(\mathbf{x}_{C_l} )
		\right] \delta \mathbf{X} ,
	\end{aligned}
\end{equation}
where $\delta \mathbf{X}\in \mathbb{R}^{N_l m}$ is the global error vector, and $N_l$ is the number of nodes forming the cluster $C_l$.

It is immediately seen that the linearized  Eq.~(\ref{eq:deltaxcluster}) is formally identical to Eq.~(\ref{eq:deltax}), and therefore the same expansion of the error can be made with the eigenvectors of the Laplacian.
The difference, however, is that the evaluation of the maximum Lyapunov exponents requires now to calculate the Jacobians of the functions $\mathbf{f}$ and $\mathbf{g}$ over
the cluster-synchronous solution $\mathbf{x}_{C_l}$, which obeys Eq.~(\ref{eq:generalclustersynch}).
In other words, while the MSF formalism calculates the Maximum Lyapunov exponents using the trajectories experienced by the whole network's synchronous solution (a state in which each node of the network repeats the same dynamical behavior of a single, isolated, system), the trajectories experienced by the cluster synchronous state are perturbed by an extra term $K(t) = d \sum_{j \notin {\cal C}_l} {\cal A}_{ij} \: [\mathbf{g}(\mathbf{x}_{j})-  \mathbf{g}(\mathbf{x}_{C_l})]$, and depend therefore explicitly on the entire network's dynamics, and on the specific coupling function.
This fact leads to two main consequences:
\begin{itemize}
	\item The very same cluster-synchronous solution is not invariant, as $K(t)$ explicitly depends on $d$. In particular, at each value of the coupling strength one would have a distinct cluster-synchronous state, and therefore the entire framework of the MSF would make no sense in this case as it would not be possible to rigorously separate dynamics and structure;
	\item The perturbation $K(t)$ may lead the synchronous trajectories to visit areas of the phase space which are instead never visited by a single isolated system, and therefore it may determine slight variations in the calculations of the maximum Lyapunov exponents, and consequently slight variations in the determination of the critical coupling strength value at which the cluster synchronous state becomes stable.
\end{itemize}

On the other hand, the term $K(t)$ is directly proportional to $d$, and therefore it has to be expected that such a perturbation will in fact be small across the transition to complete synchronization, where $d$ starts from 0 and only slightly increases to values which are normally smaller than 1. In addition, $K(t)$ consists of the sum of terms which are in general uncorrelated, as there are no constraints on the dynamics of the nodes which do not form part of the synchronous cluster. This latter fact would also contribute to determine smallness of the perturbation.

In our study, we decided therefore to adopt a practical approximation to the problem, by assuming that the perturbation $K(t)$ is always negligible and consequently that the trajectories visited by the clustered synchronous nodes are always equal to those that characterize complete synchronization (i.e., those of a single, isolated, system). This allows one to refer to a unique MSF (the one constructed in the whole network's synchronous state) for determining the stability properties of all cluster synchronized states.

\subsection*{Simulations}

Simulations were performed with an adaptative Tsit integration algorithm implemented in Julia. In each trial, the network is simulated for a total period of 1,500 time units, and the synchronization errors are averaged over the last $\Delta T$ = 100 time units.\\
As one is only interested to monitor the vanishing of $E_{cl}$, with the purpose of saving calculations in all our simulations the synchronization error is computed by only taking into account those variables of the system's state where the coupling is acting. This implies that, when referring to the R\"ossler (Lorenz) system, $E_{cl}$ has been evaluated taking into account only the difference ${y}_i-\bar{y}_{cl}$ (${x}_i-\bar{x}_{cl}$). The  results are, indeed, identical to those obtained when all state variables of the systems are accounted for in the evaluation of $E_{cl}$, as its formal definition of Eq.~(\ref{error}) would instead require.

\section*{Data Availability}

The data supporting the findings of this study are available within the article and Supplementary Information, and can be accessed in the repository \url{https://github.com/goznalo-git/ClusterSynchronization}.

\section*{Code Availability}

All relevant codes used are open source and available at the repository \url{https://github.com/goznalo-git/ClusterSynchronization}.


\section*{Acknowledgements}
G. C.-A. and K. A.-B. acknowledge funding from projects M1993 and M2978 (URJC Grants). G. C.-A. acknowledges funding from the URJC fellowship PREDOC-21-026-2164. S.B. acknowledges support from the project n.PGR01177 of the Italian Ministry of Foreign Affairs and International Cooperation.

\section*{Author contributions statement}
S.B. and S.J. designed the research project. A.B., F.N., G.C.-A. and K.A.-B. performed the simulations, K.K. and R.S.-G. developed the mathematical formalism.
S.B. and S.J. jointly supervised the research. All authors jointly wrote and reviewed the manuscript.

\section*{Competing interests}
The authors declare no competing interests.

\clearpage
\onecolumngrid

\begin{center}
{\huge{Supplementary Information}} \\ \vspace{2mm}
	{\LARGE\textit{The transition to synchronization of networked systems}} \\
	\vspace{2mm}
	Atiyeh Bayani, Fahimeh Nazarimehr, Sajad Jafari{$^*$}, Kirill Kovalenko, Gonzalo Contreras-Aso, \\ Karin Alfaro-Bittner, Ruben J. S\'anchez-Garc\'ia{$^\dagger$}, and Stefano Boccaletti.
\end{center}

In what follows we consider a connected weighted undirected graph $G$ with $N$ nodes and uniquely identified by its adjacency matrix $A$ and its Laplacian matrix ${\cal L}$. Furthermore, we will call  $\lambda_1=0<\lambda_2\leq\lambda_3\ldots\leq\lambda_N$ the (ordered in size) real and nonnegative eigenvalues of ${\cal L}$, and $\mathbf{v}_1= \frac{1}{\sqrt{N}}(1,1,1,\ldots,1)^T, \mathbf{v}_2,\mathbf{v}_3,\ldots,\mathbf{v}_N$ corresponding orthonormal eigenvectors. Finally, $V$ will be the matrix having as columns the eigenvectors $\mathbf{v}_1,\ldots,\mathbf{v}_N$.

\section{The clusters emerging in the transition to synchronization and the structural properties of the network}

This is the main section of our Supplementary Information, where we give the mathematical proofs of the results stated in the main text and study the structural properties of the clusters observed during the transition to synchronization.

\subsection{Spectral blocks and the proof of the main Theorem}

\begin{definition} \label{def:spectral-block}
	A subset ${\cal S} \subseteq \{\mathbf{v}_2,\ldots,\mathbf{v}_N\}$ consisting of $k-1$ Laplacian eigenvectors is called a \textit{spectral block} localized at nodes $\{i_1,\ldots, i_k\}$ if
	\begin{itemize}
		\item each eigenvector from this set has all entries except $i_1, i_2, \ldots,i_k$ equal to $0$;
		\item any eigenvector $\mathbf{v}_i$ not belonging to this set has the entries $i_1, i_2,\ldots, i_k$ all equal, i.e. $v_{i_1}=v_{i_2}=\ldots=v_{i_k}$.
	\end{itemize}
\end{definition}

Note that, since all eigenvectors $\mathbf{v}_2,\ldots,\mathbf{v}_N$ are orthogonal to $\mathbf{v}_1$, the sum of all entries of the eigenvectors $\mathbf{v}_2,\ldots,\mathbf{v}_N$ has to be equal to $0$.

\begin{theorem} \label{th:main}
	Let $G$ be a connected network with $N$ vertices and Laplacian $\mathcal{L}$ and let $\{i_1,\ldots,i_k\} \subseteq \{1,\ldots,N\}$.
	Then the following two statements are equivalent:
	\begin{enumerate}
		\item All $k$ nodes belonging to a cluster defined by the indices $\{i_1,\ldots, i_k\}$ have the same connections with the same weights with all other nodes not belonging to the cluster, i.e. for any $p, q \in \{i_1,\ldots, i_k\}$ and $j\not\in\{i_1,\ldots, i_k\}$ one has ${\cal L}_{p j}={\cal L}_{q j}$.
		\item There is a spectral block ${\cal S}$ made of $k-1$ Laplacian eigenvectors localized at the nodes $\{i_1,\ldots, i_k\}$.
	\end{enumerate}
	
	In this case, starting from a given $n$, the $(p,q)$ entries of the matrices $S_n$ defined in the main text will be equal to $2$, for all $p, q \in \{i_1,\ldots, i_k\}$, $p \neq q$.
	Moreover, the eigenvalues corresponding to this spectral block will only depend on the subgraph induced by the nodes $\{i_1,\ldots, i_k\}$ and the total added degree from all other nodes of the network.
\end{theorem}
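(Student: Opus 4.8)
The plan is to route both implications through a single structural fact: condition (1) is equivalent to the $\mathcal{L}$-invariance of a distinguished $(k-1)$-dimensional subspace, and that invariance is in turn equivalent to the existence of the spectral block. To set this up, I would reorder the nodes so that the cluster $C = \{i_1,\ldots,i_k\}$ comes first and write the Laplacian in block form
\begin{equation*}
\mathcal{L} = \begin{pmatrix} \mathcal{L}_C & B \\ B^{T} & \mathcal{L}_{\bar C} \end{pmatrix},
\end{equation*}
where $\mathcal{L}_C$ is the $k\times k$ principal submatrix on $C$ and $B$ (of size $k \times (N-k)$) collects the cluster-to-exterior couplings. Writing $\mathbf{1}$ for the all-ones vector on $C$, I would introduce the subspace $W = \{(\mathbf{w},\mathbf{0}) : \mathbf{1}^{T}\mathbf{w}=0\}$ of vectors supported on $C$ with zero sum (dimension $k-1$), whose orthogonal complement is exactly $W^{\perp} = \{\mathbf{x} : x_{i_1}=\cdots=x_{i_k}\}$ (dimension $N-k+1$).

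The core claim I would prove first is that condition (1) holds if and only if $W$ is $\mathcal{L}$-invariant. Condition (1) says precisely that every row of $B$ is the same vector $\mathbf{b}$, i.e.\ $B = \mathbf{1}\,\mathbf{b}^{T}$. In that case $B^{T}\mathbf{w} = \mathbf{b}(\mathbf{1}^{T}\mathbf{w}) = \mathbf{0}$ for any $\mathbf{w}$ of zero sum, so $\mathcal{L}(\mathbf{w},\mathbf{0})$ has vanishing exterior part; moreover the zero-row-sum property of $\mathcal{L}$ forces $\mathcal{L}_C\mathbf{1} = -(\mathbf{1}^{T}\mathbf{b})\,\mathbf{1}$, so $\mathbf{1}$ is an eigenvector of the symmetric matrix $\mathcal{L}_C$ and its orthogonal complement within $C$ (the zero-sum part) is $\mathcal{L}_C$-invariant — together these give $\mathcal{L}\,W \subseteq W$. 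Conversely, if $W$ is invariant then for every zero-sum $\mathbf{w}$ the exterior block $B^{T}\mathbf{w}$ must vanish, so the linear map $\mathbf{w}\mapsto B^{T}\mathbf{w}$ annihilates the entire zero-sum hyperplane, which forces the rows of $B$ to be constant along $C$; by symmetry of $\mathcal{L}$ this is exactly condition (1).

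Next I would convert invariance of $W$ into the spectral-block statement. Because $\mathcal{L}$ is symmetric, $W$ is invariant if and only if $W^{\perp}$ is, and in that case one can choose an orthonormal eigenbasis of $\mathcal{L}$ adapted to the decomposition $W \oplus W^{\perp}$: the $k-1$ eigenvectors lying in $W$ are supported on $C$ and (being orthogonal to $\mathbf{v}_1$) sum to zero, while the remaining $N-k+1$ eigenvectors lie in $W^{\perp}$ and hence have equal entries on $C$ — this is precisely a spectral block $\mathcal{S}$. The converse is immediate: a spectral block consists of $k-1$ linearly independent eigenvectors supported on $C$ and orthogonal to $\mathbf{1}$, so they span the $(k-1)$-dimensional space $W$, which is therefore $\mathcal{L}$-invariant. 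I anticipate the one genuinely delicate point is the handling of spectral degeneracies: when an eigenvalue is shared between the spectra of $\mathcal{L}|_{W}$ and $\mathcal{L}|_{W^{\perp}}$ the eigenvectors are not unique, and statement (2) must be read (as it is phrased) as the \emph{existence} of a suitable orthonormal eigenbasis, which the $W\oplus W^{\perp}$ splitting always supplies.

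Finally, for the two closing assertions I would argue as follows. Writing $R_p$ for the $p$-th row of the orthogonal matrix $V$, the $(p,q)$ entry of $S_n$ is $\sum_{m=n}^{N}(v_{mq}-v_{mp})^2$, and summing over all $m$ gives $\|R_p-R_q\|^2 = 2$ for $p\neq q$ by orthonormality of the rows. For $p,q\in C$ every eigenvector outside $\mathcal{S}$ contributes a zero term (equal entries on $C$), so the full value $2$ is carried entirely by the indices in $\mathcal{S}$; hence as soon as $n$ is small enough that the summation range $[n,N]$ contains all of $\mathcal{S}$, the $(p,q)$ entry of $S_n$ equals $2$ and stays $2$ thereafter. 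For the eigenvalues, I would note that $\mathcal{L}_C$ has off-diagonal entries $-A_{pq}$ (the induced-subgraph weights) and diagonal entries equal to internal-plus-external degree; by condition (1) the external degree is a common value $\delta$ for all $p\in C$, so $\mathcal{L}_C = L^{\mathrm{in}} + \delta\,I$, where $L^{\mathrm{in}}$ is the Laplacian of the subgraph induced on $C$. The spectral-block eigenvalues are the eigenvalues of $\mathcal{L}_C$ restricted to the zero-sum subspace, namely $\mu^{\mathrm{in}} + \delta$ with $\mu^{\mathrm{in}}$ ranging over the nonzero-mode Laplacian eigenvalues of the induced subgraph — manifestly a function only of the induced subgraph and of the total added degree $\delta$.
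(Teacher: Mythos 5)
Your proof is correct, and its forward direction ($1\Rightarrow 2$) coincides with the paper's: both identify the $(k-1)$-dimensional zero-sum subspace supported on the cluster (your $W$, the paper's $U$), prove it is $\mathcal{L}$-invariant under condition (1), pass to the orthogonal complement by symmetry, and read off the spectral block from an adapted eigenbasis; you also both flag the same degeneracy caveat, and your treatment of the $S_n$ entries and of the shifted induced-subgraph eigenvalues ($\mathcal{L}_C = L^{\mathrm{in}}+\delta I$ versus the paper's $L(G')=L'-d\,\mathbb{I}_k$) is identical in substance. Where you genuinely diverge is the converse $2\Rightarrow 1$: the paper writes the eigenvector matrix $V$ in block form with a zero block and a column-constant block $C$, and computes the off-diagonal block of $\mathcal{L}=V\Sigma V^{T}$ as $CD_1(V'')^{T}$ to see it is column-constant. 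You instead upgrade the key lemma to a full equivalence --- condition (1) holds iff $W$ is $\mathcal{L}$-invariant --- and then observe that a spectral block spans $W$, making it invariant, so that $B^{T}$ must annihilate the zero-sum hyperplane and hence have constant columns. Your route is more economical and unifies both implications through one structural statement, at the cost of having to verify both directions of the invariance lemma (your forward verification via $\mathcal{L}_C\mathbf{1}=-(\mathbf{1}^{T}\mathbf{b})\mathbf{1}$ is a slight variant of the paper's simpler observation that $\mathbf{1}_N^{T}\mathcal{L}\mathbf{u}=0$ for every $\mathbf{u}$); the paper's matrix-factorization argument is more computational but requires nothing beyond the spectral decomposition. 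Both are sound.
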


\begin{proof}
	Without loss of generality, we can assume $\{i_1,\ldots, i_k\}=\{1,\ldots,k\}$.
	
	\noindent\fbox{$1\Longrightarrow 2$} Consider the $(k-1)$-dimensional subspace $U$ formed by the vectors ${\mathbf u}=(u_1,u_2,\ldots,u_N)^T \in \mathbb{R}^N$ for which $u_j = 0$ for all $j > k$ and the sum of all entries is equal to $0$, that is,
	\begin{equation}
		U = \left\{ (u_1,u_2,\ldots,u_N)^T \in \mathbb{R}^N \mid u_j = 0 \text{ for all } j > k \text{ and } \sum_{j=1}^N u_j = 0\right\}.
	\end{equation}
	This is indeed a $(k-1)$-dimensional subspace with (orthogonal) basis $\{\mathbf{e}_2,\ldots,\mathbf{e}_k\}$, where $\mathbf{e}_i$ is a vector with non-zero entries 1 at position 1 and $-1$ at position $i$. We want to show that $U$ is an invariant subspace of ${\cal L}$, that is, $\mathcal{L}\mathbf{u} \in U$ for all $\mathbf{u} \in U$.
	
	Let $\mathbf{u} = (u_1,u_2,\ldots,u_N)^T \in U$ and $j>k$. Then, the $j$-th entry of $\mathcal{L}\mathbf{u}$ is equal to
	\begin{equation}
		\sum_{i=1}^{N} {\cal L}_{ji} u_i = \sum_{i=1}^{k} {\cal L}_{ji} u_i.
	\end{equation}
	By hypothesis, we have that ${\cal L}_{ji}={\cal L}_{ij} = a$  for all $i \in \{1,\ldots,k\}$, and hence
	\begin{equation}
		\sum_{i=1}^{k} {\cal L}_{ji} u_i = a \sum_{i=1}^{k}  u_i = a \cdot 0=0.
	\end{equation}
	It means that every entry of ${\cal L}\mathbf{u}$ after the $k$-th entry is $0$. The second part (the sum of all the entries of the vector $\mathcal{L}\mathbf{u}$ is zero) holds for any vector: the column (or row) sum of $\mathcal{L}$ is zero, that is, $\mathbf{1}_N \mathcal{L} = \mathbf{0}_N$, which implies $\mathbf{1}_N\mathcal{L}\mathbf{u}=\mathbf{0}_N$ for any vector $\mathbf{u}$, where $\mathbf{1}_N$  (respectively $\mathbf{0}_N$) is a $N$-dimensional vector made of all entries equal to 1 (respectively 0).
	
	Consider now $L'$, the $k \times k$ principal submatrix of $\mathcal{L}$ given by the first $k$ rows and columns. If $\mathbf{u} \in U$, then $\mathbf{u} = (\mathbf{u}' | \mathbf{0}_{N-k})^T$ and we can write, in matrix block form,
	\begin{equation}
		\mathcal{L}\mathbf{u} = \left(\begin{array}{c|c}
			L' & A \\ \hline B & C
		\end{array}\right)
		\left( \begin{array}{c} \mathbf{u}' \\ \hline \mathbf{0} \end{array} \right) = \left( \begin{array}{c} L'\mathbf{u}' \\ \hline B\mathbf{u}'  \end{array} \right) =
		\left( \begin{array}{c} L'\mathbf{u}' \\ \hline \mathbf{0} \end{array} \right) \in U {,
	} \end{equation}
	where $B\mathbf{u}'=0$ since $\mathcal{L}\mathbf{u} \in U$. In particular, $\mathcal{L}\mathbf{u} = \lambda \mathbf{u}$ if and only if $L'\mathbf{u}' = \lambda \mathbf{u}'$.
	
	Consider now the graph $G'$ induced by the vertices $1,\ldots,k$. Its Laplacian $L(G')$ equals $L'$ except for the fact that it has diagonal elements $d_i^\text{int}$ instead of $d_i=d_i^\text{int}+d_i^\text{ext}$. By hypothesis, $d_i^\text{ext} = d$  for all $i \leq k$. All in all, $L(G')=L'-d\,\mathbb{I}_k$ and, in particular, $(\lambda,\mathbf{v})$ is an eigenpair of $L(G')$ if and only if $(\lambda+d,\mathbf{v})$ is an eigenpair of $L'$.

	Let $\mathbf{u}'_1,\ldots,\mathbf{u}'_k$ be an orthonormal basis of the Laplacian $L(G')$ with $\mathbf{u}'_1$ a constant vector. In particular, the sum of the entries of each $\mathbf{u}'_2,\ldots,\mathbf{u}'_k$ must be zero. If we define $\mathbf{u}_i = (\mathbf{u}'_i | \mathbf{0}_{N-k})^T$ for all $i=2,\ldots,k$, then such vectors are in the subspace $U$ and, following the discussion above, $\mathbf{u}_2,\ldots,\mathbf{u}_k$ must be eigenvectors of $\mathcal{L}$. Moreover, the eigenvalues corresponding to $\mathbf{u}_2,\ldots,\mathbf{u}_k$ depend only on the structure of the induced graph $G'$ and the external degree $d$.
	
	Next, we show that every vector $\mathbf{v} \in \mathbb{R}^N$ orthogonal to $\mathbf{u}_2,\ldots,\mathbf{u}_k$ must be constant on its first $k$ entries. Indeed, since $\mathbf{u}'_2,\ldots,\mathbf{u}'_k$ are linearly independent and $U$ has dimension $k-1$, these vectors are also generate $U$. In particular, $\mathbf{v}$ must be orthogonal to all vectors in $U$. Recall (see above) that $U$ has a basis $\{\mathbf{e}_2,\ldots,\mathbf{e}_k\}$, and note that $\mathbf{v} \cdot \mathbf{e}_j=v_1 - v_j = 0$ implies $v_1=v_j$ for all $k=2,\ldots,k$.
	
	Therefore, if we complete the eigenvectors $\mathbf{v}_1, \mathbf{u}_2,\ldots,\mathbf{u}_k$, where $\mathbf{v}_1$ is a constant unit vector, to an orthonormal eigenbasis $\{\mathbf{v}_1, \mathbf{u}_2,\ldots,\mathbf{u}_k, \mathbf{v}_{k+1},\ldots,\mathbf{v}_N\}$, the subset $\mathcal{S}=\{\mathbf{u}_2,\ldots,\mathbf{u}_k\}$ necessarily forms a spectral block (Definition~\ref{def:spectral-block}).
	
	\fbox{$2\Longrightarrow1$} Without loss of generality, we assume that the last $k-1$ eigenvectors form the spectral block localized at nodes $\{1,\ldots k\}$. Then, the matrix $V$ having the eigenvectors as columns can be written in block form as
	\begin{equation}
		V = \left( \begin{array}{c|c}
			C & V' \\ \hline V'' & \mathbf{0}_{N-k,k-1}
		\end{array} \right),
	\end{equation}
	where $V'$ and $V''$ are $k \times (k-1)$ and $(N-k) \times (N-k+1)$ matrices respectively, $C$ is a $k \times (N-k+1)$ column-constant matrix ($C_{pj} = C_{qj}$ for all $p, q, j$), and $\mathbf{0}_{k-1,N-k}$ is a zero $(N-k) \times (k-1)$ matrix.
	
	If $\Sigma$ is the diagonal matrix of the eigenvalues $\lambda_1,\lambda_2,\ldots, \lambda_N$, we can write the Laplacian as
	\begin{equation}
		{\cal L} = V \Sigma V^T = \left(  \begin{array}{c|c}
			C & V' \\ \hline V'' & \mathbf{0}
		\end{array} \right)
		\left(  \begin{array}{c|c}
			D_1 & \mathbf{0} \\ \hline \mathbf{0} & D_2
		\end{array} \right)
		\left( \begin{array}{c|c}
			C^T& (V'')^T \\ \hline (V')^T  & \mathbf{0}
		\end{array}  \right) {,
	} \end{equation}
	where $D_1$ (respectively $D_2$) is the diagonal matrix of the eigenvalues $\lambda_1,\ldots,\lambda_{N-k+1}$ (respectively $\lambda_{N-k+1},\ldots,\lambda_N$), and $\mathbf{0}$ represents a zero matrix of the appropriate size. This implies that the connections between the nodes $\{1,\ldots,k\}$ and the nodes $\{k+1,\ldots,N\}$ are described by the top right submatrix
	\begin{equation}
		\cal{L} = V \Sigma V^T = \left(
		\begin{array}{c|c}
			\ast & CD_1(V'')^T \\ \hline \ast & \ast
		\end{array} \right).
	\end{equation}
	Finally, note that if $C$ is a column-constant matrix, so is $CM$ for any matrix $M$ where the product exists. Or, explicitly, if $1 \le p, q \le k$ and $j >k$, then
	\begin{equation}
		\mathcal{L}_{pj} = \sum_{i=1}^{N-k+1} C_{pi} \Sigma_{ii} V''_{ji} + 0 = \mathcal{L}_{qj}.
	\end{equation}
	For the final part of the statement, let us remind the $(i,j)$ entry of $S_n$ is equal to
	$$
	\sum_{k=n}^{N} (v_{kj}-v_{ki})^2.
	$$
	Let $i,j$ be any two different nodes at which the spectral block $\mathcal{S}$ is localized. Then, if $\mathbf{v}_k$ does not belong to $\mathcal{S}$ the term $(v_{kj}-v_{ki})^2$ is equal to $0$. Hence, the $(i,j)$ entry of $S_n$ changes only when $\mathbf{v}_n$ belongs to $\mathcal{S}$. So, if $\lambda_n$ is greater than the maximum eigenvalue of $\mathcal{S}$ then the $(i,j)$ entry of $S_n$ is 0. On the other hand, if $\lambda_n$ is less than the minimal eigenvalue of $\mathcal{S}$ the $(i,j)$ entry of $S_n$ is $2$, as was to be shown.
\end{proof}

Actually, to be mathematically correct, the formulation of the second statement of the theorem should be \textit{There is an eigenbasis $\mathbf{v}_1,\ldots, \mathbf{v}_N$ and a spectral block $\mathcal{S}$ localized at nodes $i_1,\ldots,i_k$}, since it could happen that eigenvectors not from $U$ could have the same eigenvalues as eigenvectors in $\mathcal{S}$, so they could be not orthogonal to $U$. However, such a possibility does not affect the synchronization scenario, so we decided to omit to mention explicitly this extra case for the sake of simplicity.

One important example of equitable cells are symmetry orbits, that is, orbits under the action of the automorphism group of the graph. Note that, in real-world networks, most orbits are either complete or empty subgraphs with all permutations of the vertices realized as network symmetries~\cite{sanchez2008, sanchez2020}. This particular case can be characterized as follows.

\begin{theorem}
	\label{th:strong}
	Let $G$ be a connected network with $N$ vertices and Laplacian $\mathcal{L}$ and let $\{i_1,\ldots,i_k\} \subseteq \{1,\ldots,N\}$. Then the following three statements are equivalent:
	\begin{enumerate}
		\item For any pair of vertices indexed in the set $\{i_1,\ldots, i_k\}$  a permutation of this pair preserves the Laplacian of the network.
		\item The graph induced by $\{i_1,\ldots, i_k\}$ is either complete or empty, and for any $p, q \in \{i_1,\ldots, i_k\}$ and $j\not\in\{i_1,\ldots, i_k\}$ one has ${\cal L}_{pj}={\cal L}_{qj}$.
		\item There is a spectral block $\mathcal{S}$ localized at nodes $i_1,\ldots, i_k$ and all $k-1$ eigenvectors of $\mathcal{S}$ have the same, degenerate, eigenvalue.
	\end{enumerate}
\end{theorem}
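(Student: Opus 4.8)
The plan is to prove the cyclic chain $1 \Rightarrow 2 \Rightarrow 3 \Rightarrow 1$, using Theorem~\ref{th:main} as the workhorse that already ties the external-equitability condition to the existence of a spectral block. The genuinely new content of this strengthened statement concerns the induced subgraph $G'$ on $\{i_1,\ldots,i_k\}$: condition 2 pins down its combinatorial type (complete or empty), while condition 3 records the corresponding spectral signature (a single, degenerate eigenvalue attached to the block). My strategy is therefore to translate the condition on $G'$ into a statement about the induced Laplacian $L(G')$ and to feed it into the eigenvalue bookkeeping already established inside the proof of Theorem~\ref{th:main}.

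For $1 \Rightarrow 2$ I would unpack what it means for the transposition $\tau=(p\,q)$ of any pair $p,q\in\{i_1,\ldots,i_k\}$ to preserve $\mathcal{L}$, i.e. $\mathcal{L}_{\tau(a)\tau(b)}=\mathcal{L}_{ab}$ for all $a,b$. Applying this with one index lying outside the set immediately yields $\mathcal{L}_{pj}=\mathcal{L}_{qj}$ for every $j\notin\{i_1,\ldots,i_k\}$, which is the external-equitability half of statement 2. Applying it with both indices inside the set forces every internal weight $\mathcal{L}_{pr}$ (for distinct $p,r$ in the set) to take a common value; if that value is nonzero the induced graph is complete, and if it is zero it is empty, giving the remaining half.

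For $2 \Rightarrow 3$ I would invoke Theorem~\ref{th:main}: the external-equitability half already guarantees a spectral block $\mathcal{S}$ of $k-1$ eigenvectors, whose eigenvalues (per that proof) are exactly the common external degree $d$ plus the eigenvalues of $L(G')$ restricted to the sum-zero subspace $U'\subset\mathbb{R}^k$. The key point is that when $G'$ is empty we have $L(G')=0$, and when $G'$ is complete with common weight $w$ we have $L(G')=w\,(k\,\mathbb{I}_k-J_k)$ with $J_k$ the all-ones matrix; in either case $L(G')$ acts as a single scalar ($0$ or $wk$) on all of $U'$, so the $k-1$ block eigenvalues coincide, which is precisely the claimed degeneracy. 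For the closing implication $3 \Rightarrow 1$ I would reverse this reasoning: degeneracy means $L(G')$ is scalar on $U'$, hence $L(G')=c\,(\mathbb{I}_k-\tfrac1k J_k)$ for some $c\ge 0$ (using positive semidefiniteness of the induced Laplacian), and this is the Laplacian of the empty graph ($c=0$) or of the complete graph with weight $c/k$ ($c>0$). Combining this with the external equitability again supplied by Theorem~\ref{th:main}, I would then verify directly that each transposition $(p\,q)$ leaves every entry of $\mathcal{L}$ fixed (equal diagonals, equal internal weights, equal external connections), recovering statement 1.

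The main obstacle is the spectral characterization at the heart of $2 \Leftrightarrow 3$: showing that the $k-1$ block eigenvalues are all equal precisely when $L(G')$ is a scalar multiple of the projection $\mathbb{I}_k-\tfrac1k J_k$ onto the sum-zero subspace, and recognizing that such Laplacians correspond exactly to complete or empty induced graphs. The eigenvalue accounting from Theorem~\ref{th:main} does the heavy lifting, but care is needed in the weighted setting (so that \emph{complete} is read as a single common nonzero weight) and in handling the subtlety noted after Theorem~\ref{th:main}, namely that eigenvectors outside $\mathcal{S}$ happening to share the block eigenvalue must be absorbed by choosing a suitable eigenbasis.
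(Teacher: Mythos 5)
Your proposal is correct, and the overall skeleton ($1\Leftrightarrow 2$ by unpacking transpositions, then tying statement 3 to the induced subgraph via Theorem~\ref{th:main}) matches the paper; the $1\Leftrightarrow 2$ step is essentially identical, including the observation that equal internal Laplacian entries force a uniform-weight clique or an empty induced graph. Where you genuinely diverge is in the spectral half. For $2\Rightarrow 3$ the paper argues by symmetry: it picks an eigenvector $\mathbf{u}\in U$ with eigenvalue $\lambda$, uses the transpositions (now known to preserve $\mathcal{L}$) together with Proposition~\ref{prop:permutations} to manufacture the difference vectors $\mathbf{e}_i$ as $\lambda$-eigenvectors, and concludes that all of $U$ sits in one eigenspace; for $3\Rightarrow 1$ it notes that any permutation of the cluster preserves $U$ and fixes every eigenvector orthogonal to $U$, so Proposition~\ref{prop:permutations} gives $P^{-1}\mathcal{L}P=\mathcal{L}$ directly. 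You instead compute: $L(G')$ equals $0$ or $w(k\,\mathbb{I}_k-J_k)$, hence acts as a scalar on the sum-zero subspace, and the eigenvalue shift $L'=L(G')+d\,\mathbb{I}_k$ from the proof of Theorem~\ref{th:main} delivers the degeneracy; reversing this, degeneracy plus positive semidefiniteness forces $L(G')=c(\mathbb{I}_k-\tfrac1k J_k)$, which is exactly the Laplacian of an empty or uniformly weighted complete graph, after which you re-enter via statement 2. Both routes are sound; the paper's is shorter once Proposition~\ref{prop:permutations} is in hand and avoids writing down $L(G')$ explicitly, while yours makes the quantitative content visible (the block eigenvalue is $kw+d$, i.e.\ the critical value $\lambda_{\textup{crit}}$ used later in the Supplementary Information) and localizes the "complete or empty" dichotomy in a single matrix identity. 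You also correctly flag the two delicate points — that "complete" must mean a single common weight, and that degenerate eigenvalues shared with eigenvectors outside $\mathcal{S}$ require choosing a suitable eigenbasis — both of which the paper likewise acknowledges.
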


In the proof of this theorem we will use the following proposition:

\begin{proposition} \label{prop:permutations}
	Let $\sigma$ be a permutation of nodes $\{1,\ldots, N\}$ with permutation matrix $P$. Then the following two statements are equivalent
	\begin{enumerate}
		\item Permutation $\sigma$ preserves $\mathcal{L}$, i.e. $P^{-1}\mathcal{L}P=\mathcal{L}$.
		\item For any eigenvector $\mathbf{v}$ with eigenvalue $\lambda$, the vector $P\mathbf{v}$ is also eigenvector of $\mathcal{L}$ with the same eigenvalue  $\lambda$.
	\end{enumerate}
\end{proposition}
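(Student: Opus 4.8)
The plan is to recast both statements in terms of whether $P$ commutes with $\mathcal{L}$, and then to exploit the fact that $\mathcal{L}$ is real symmetric and hence orthogonally diagonalizable. The pivotal observation is that $P^{-1}\mathcal{L}P=\mathcal{L}$ is equivalent to the commutation relation $\mathcal{L}P=P\mathcal{L}$, since $P$ is invertible (being a permutation matrix, one in fact has $P^{-1}=P^T$). This reformulation underlies both implications.

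For the direction $(1)\Rightarrow(2)$ I would argue directly: assuming $\mathcal{L}P=P\mathcal{L}$ and $\mathcal{L}\mathbf{v}=\lambda\mathbf{v}$, one computes $\mathcal{L}(P\mathbf{v})=P(\mathcal{L}\mathbf{v})=\lambda(P\mathbf{v})$, so $P\mathbf{v}$ lies in the same eigenspace and is an eigenvector with eigenvalue $\lambda$. This is a one-line calculation and presents no difficulty.

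The substantive direction is $(2)\Rightarrow(1)$. Here the key step is to upgrade the hypothesis about individual eigenvectors into a statement about whole eigenspaces. Since $\mathcal{L}$ is real symmetric, $\mathbb{R}^N$ decomposes as an orthogonal direct sum of the eigenspaces $E_\lambda$. The hypothesis says every eigenvector $\mathbf{v}\in E_\lambda\setminus\{\mathbf{0}\}$ is sent by $P$ to an eigenvector with the same eigenvalue, i.e. $P\mathbf{v}\in E_\lambda$; together with $P\mathbf{0}=\mathbf{0}$ this yields $P(E_\lambda)\subseteq E_\lambda$ for each $\lambda$. I would then use that $\mathcal{L}$ acts as the scalar $\lambda$ on $E_\lambda$, so that for any $\mathbf{w}\in E_\lambda$ both $\mathcal{L}(P\mathbf{w})=\lambda P\mathbf{w}$ (here invoking $P\mathbf{w}\in E_\lambda$) and $P(\mathcal{L}\mathbf{w})=P(\lambda\mathbf{w})=\lambda P\mathbf{w}$ coincide. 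Because the eigenspaces span $\mathbb{R}^N$, the identity $\mathcal{L}P=P\mathcal{L}$ then holds on all of $\mathbb{R}^N$, which is precisely statement $(1)$.

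The only point demanding care is the treatment of degenerate eigenvalues: when $\dim E_\lambda>1$, the hypothesis does not determine the image of a given eigenvector. I expect this to be the main conceptual obstacle, and the way around it is to resist tracking individual eigenvectors and instead to work entirely at the level of eigenspaces. The argument above deliberately uses only that $P$ preserves each $E_\lambda$ and that $\mathcal{L}$ is scalar there, so the degenerate case is handled automatically without any further bookkeeping.
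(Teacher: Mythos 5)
Your proposal is correct and follows essentially the same route as the paper: the forward direction is the identical one-line computation $\mathcal{L}P\mathbf{v}=P\mathcal{L}\mathbf{v}=\lambda P\mathbf{v}$, and the reverse direction rests on the same fact that the hypothesis forces $P$ to map each $\lambda$-eigenspace into itself, combined with diagonalizability of the symmetric Laplacian. The only cosmetic difference is that the paper packages this as the matrix identity $\mathcal{L}=(PV)\Sigma(PV)^{-1}=P\mathcal{L}P^{-1}$, whereas you verify commutation eigenspace by eigenspace; your phrasing makes the handling of degenerate eigenvalues slightly more explicit, but the content is the same.
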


\begin{proof}[Proof of Proposition~\ref{prop:permutations}]
	\fbox{$1\Longrightarrow2$} Let us consider any eigenvector $\mathbf{v}$ of $\mathcal{L}$ with an eigenvalue $\lambda$. Recall that a permutation matrix is invertible (in fact, orthogonal). Then
	$$
	\mathcal{L}P\mathbf{v} = PP^{-1}\mathcal{L}P\mathbf{v} = P \mathcal{L} \mathbf{v} = P \lambda \mathbf{v} = \lambda P \mathbf{v}.
	$$
	
	\noindent \fbox{$2\Longrightarrow1$}
	Let $\lambda_1,\ldots,\lambda_N$ be the eigenvalues of $\mathcal{L}$ with corresponding orthonormal eigenvectors $\mathbf{v}_1,\ldots,\mathbf{v}_N$.
	Consider the associated spectral decomposition of $\mathcal{L}$, that is,
	$$
	\mathcal{L} = V \Sigma V^{-1},
	$$
	\noindent where $\Sigma$ is a diagonal matrix of the eigenvalues and $V$ the matrix with columns the corresponding eigenvectors. By hypothesis, $\{P\mathbf{v}_1,P\mathbf{v}_2,\ldots,P\mathbf{v}_N\}$ are also eigenvectors with the same eigenvalues $\lambda_1,\lambda_2,\ldots,\lambda_N$, and they are also orthonormal, since P is a permutation, hence orthogonal, matrix. As the matrix having $P\mathbf{v}_1,P\mathbf{v}_2,\ldots,P\mathbf{v}_N$ as columns is precisely $PV$, we can write
	$$
	\mathcal{L} = (PV) \Sigma (PV)^{-1} = P(V\Sigma V^{-1})P^{-1} = P\mathcal{L}P^{-1}{,
	}$$
	and we are done.
\end{proof}

\begin{proof}[Proof of Theorem~\ref{th:strong}]
	Without loss of generality, we can assume $\{i_1,\ldots, i_k\}=\{1,\ldots,k\}$.
	
	\fbox{$1\Longleftrightarrow2$} One can permute vertices $i,j$ if and only if for any vertex $p\in\{1,\ldots,N\}$ one has ${\cal L}_{ip}={\cal L}_{jp}$. Hence, one can permute any $i,j\in\{1,\ldots,k\}$ if and only if for any $i,j\in\{1,\ldots,k\}$ and any $p\in\{1,N\}$ one has ${\cal L}_{ip}={\cal L}_{jp}$. So, one just needs to show that vertices $\{1,\ldots,k\}$ form a clique. And indeed, for any $i_1,j_1,i_2,j_2\in\{1,\ldots,k\}$ one has ${\cal L}_{i_1 j_1} = {\cal L}_{i_2 j_1} = {\cal L}_{i_2 j_2}$.
	
	\fbox{$2\Longrightarrow3$}  Theorem~\ref{th:main} guarantees that one has a spectral block localized at nodes $1,\ldots,k$, and it remains to show that all the corresponding eigenvalues are the same. Define the subspace $U$ as in the proof of Theorem~\ref{th:main}.
	This subspace is generated by the vectors $\{\mathbf{e}_i\}_{i=2}^{k}$ such that for $\mathbf{e}_i$ entries number $1$ and $i$ are equal to $1$ and $-1$ correspondingly, and all other entries are equal to $0$. Let $\mathbf{u}=(u_1\ldots,u_n)\in U$ be an eigenvector of $\mathcal{L}$ with an eigenvalue $\lambda$. Since $\mathbf{u}$ cannot be constant (the sum of all its entries must be equal to $0$), one can choose $j, l\in\{1,\ldots,k\}$ such that $a=u_j\neq u_l = b$.
	Since $1\Longleftrightarrow2$, we know that the permutation $\sigma=(j l)$ of nodes $j$ and $l$ with permutation matrix $P$ preserves the Laplacian. Using the proof of Proposition~\ref{prop:permutations}, one then gets that $P\mathbf{u}$ is a $\lambda$-eigenvector. Therefore, $\mathbf{u}'=(\mathbf{u}-P\mathbf{u})/(a-b)$ is also a $\lambda$-eigenvector, and only the entries $u'_j=1$ and $u'_l=-1$ of $\mathbf{u}'$ are not equal to $0$. Using permutations $(1j)$ and $(il)$, one gets that $\mathbf{e}_i$ for any $i\in\{1,\ldots,k\}$ is a $\lambda$-eigenvector, and since they generate $U$, all eigenvectors from this spectral block have the same eigenvalue.
	
	\fbox{$3 \implies 1$}
	Suppose that we have a spectral block $\mathcal{S}$ with one degenerate eigenvalue $\lambda$. The $k-1$ eigenvectors of this spectral block generate a subspace of eigenvectors with eigenvalue $\lambda$ of dimension $k-1$, so this subspace must be equal to $U$ (defined as above). Notice that any permutation $\sigma$ of $\{1,\ldots,k\}$ preserves $U$ and is an identity map for eigenvectors orthogonal to $U$, therefore by the Proposition~\ref{prop:permutations} it preservers the Laplacian matrix ${\cal L}$.
\end{proof}

\subsection{Strong equitable partitions and synchronization}

Next, we clarify the relation between symmetry orbits, equitable partitions, and the clusters identified in the first statement of Theorem 2.2.
Recall that a partition of the vertex set of a graph into non-intersecting subsets $\{C_1,\ldots, C_r\}$, called {\emph{cells}}, is called an \textit{equitable partition} if for any $i,j\in\{1,\ldots,r\}$ there is a non-negative integer $b_{ij}$ such that any vertex from $C_i$ has exactly $b_{ij}$ neighbors in $C_j$. An \emph{external equitable partition} is defined similarly, except that one only requires the condition for $i \neq j$. For the purposes of this article, let us call a subset of vertices $C$ an {\emph{equitable cluster}} (EC) if there is an external equitable partition where $C$ is one of the cells.
We call a subset of nodes $C$ as in Statement 1 of Theorem~\ref{th:main} a {\emph{strong equitable cluster}} (SEC), that is, a subset of nodes $C \subseteq V$ such that $\mathcal{L}_{pj} = \mathcal{L}_{qj}$ for all $p, q \in C$, $j \in V \setminus C$. Equivalently, this is a subset of nodes such that every other vertex in the network is either connected to all of them (with same weights), or none at all. In turn, this is equivalent to the partition of the vertex set with cells $C$ and $\{j\}$, $j \in V \setminus C$, being equitable. In particular, it is clear that a strong equitable cluster (SEC) is an equitable cluster (EC), but the converse is not always true.

We associate a {\emph{critical value}} to each SEC, which will guarantee synchronization of the nodes in the SEC when it is reached by the coupling constant. It is defined as the sum of the second smallest Laplacian eigenvalue plus the external degree of the cluster. Formally, if $C$ is a SEC, we define $\lambda_\textup{crit}(C)=\lambda+d$, where $\lambda$ is the second smallest eigenvalue of $L(G')$, the Laplacian of the subgraph $G'$ induced by the nodes of $C$, and $d$ is the {\emph{external degree}} of $C$, that is, $d = \sum_{p \not\in C} \mathcal{L}_{ip}$ for any $i \in C$ (this is well defined because $C$ is a SEC).

We turn now our attention to symmetric orbits. The {\emph{orbit}} of a node $u$ is the subset of vertices $\{\sigma(u) \mid \sigma \in \textup{Aut}(G)\}$, that is an orbit with respect to the action the automorphism group (symmetries) of the graph $G$. We call a subset of vertices $C$ a {\emph{symmetric orbit}} (SO) if it is the orbit of one (and hence all) of its vertices. It is well known that the partition of the vertices of a graph into orbits is an equitable partition. Therefore, every SO is an EC (with respect to  the partition of the graph vertices into orbits), but not necessarily a SEC. Note that SOs naturally organize into symmetric motifs of support-disjoint permutations~\cite{sanchez2008, sanchez2009, sanchez2020}. These are SECs if they consist of one orbit, not if they have more than one orbit (see Supplementary Fig.~\ref{fig:toy-symmetries}). If a SO is also a SEC, we call it a Strong Symmetric Orbit (SSO). One can show that SSOs correspond exactly to the (not necessarily basic) symmetric motifs in~\cite{sanchez2008, sanchez2020}
with one orbit; see Supplementary Fig.~\ref{fig:toy-symmetries} for some examples. Most SOs in real-world networks are symmetric motifs with one orbit~\cite{sanchez2008,sanchez2020}, hence SSOs. The relation between these notions (EC, SO, SEC, and SSO) is shown diagrammatically in Supplementary Fig.~\ref{fig:equitable-diagram}.

\begin{figure}
	\centering
	\includegraphics[scale=0.80]{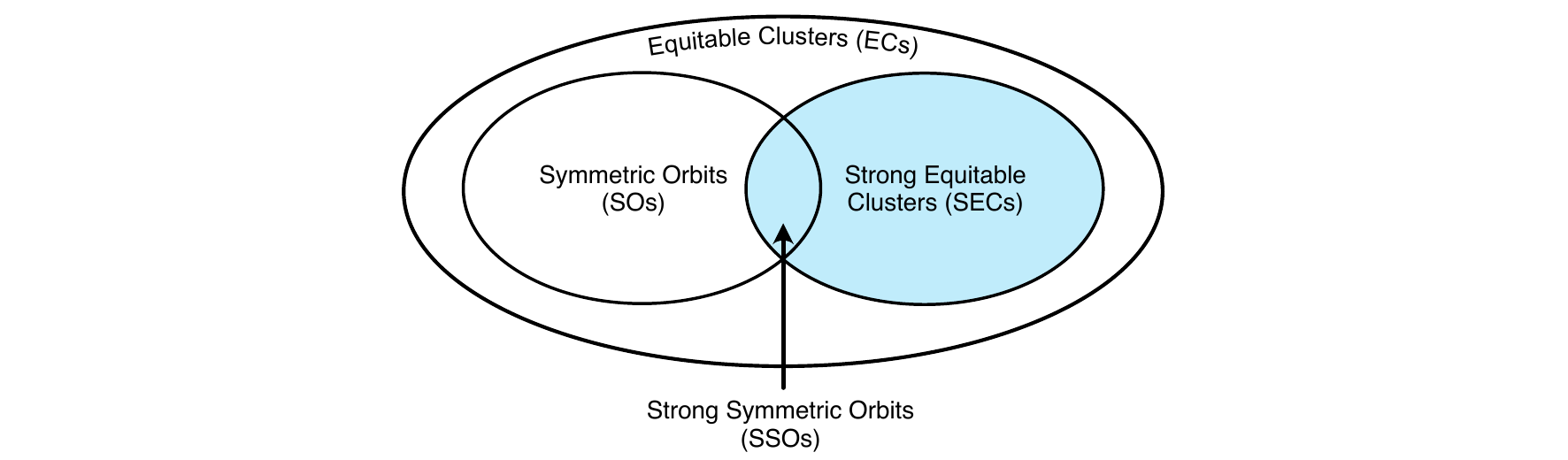}
	\caption{
		Schematic Venn diagram showing the relation between the concepts of Equitable Cluster (EC), Symmetric Orbit (SO), and Strong Equitable Cluster (SEC). Note that SOs are equitable, but not necessary strongly equitable (we call those that are, such as those in the statement of Theorem~\ref{th:strong}, Strongly Symmetric Orbits or SSOs). SSOs correspond to symmetric motifs~\cite{sanchez2008, sanchez2020} with one orbit. Only SECs (shaded blue) support cluster synchronization in all our examples. Indeed, SECs are guarantee to achieve cluster synchronization when their critical value is reached (see main text). In addition, a SO or EC that is not a SEC on its own, but it is a SEC relative to some clusters, may support synchronization if (all vertices on) those clusters synchronize among themselves; see Supplementary Fig.~\ref{fig:toy-symmetries} for an example.
	}
	\label{fig:equitable-diagram}
\end{figure}

In addition to SECs, which realize cluster synchronization when their critical value is reached, a EC (including a SO) which is not a SEC can also support synchronization when it has the equitable property with respect to already synchronized clusters (see Supplementary Figs.~\ref{fig:toy-equitable-2} and \ref{fig:toy-symmetries} for two illustrative examples). Formally, we extend the definition of SEC to a relative notion. If $C,C_1\ldots,C_m$ are pairwise disjoint subsets of nodes, we say that $C$ is a {\emph{SEC relative to $C_1, \ldots, C_m$}} if, for all $p, q \in C$, $\mathcal{L}_{pj}=\mathcal{L}_{qj}$ for all $j \in V \setminus (C \cup C_1\cup\ldots\cup C_m)$ and $\sum_{j \in C_k}\mathcal{L}_{pj}=\sum_{j \in C_k}\mathcal{L}_{qj}$ for all $k=1,\ldots,m$. An important example is an (external) equitable partition: if $C_1,\ldots,C_m$ are the cells of an external equitable partition of the graph, then each cell is a SEC relative to all the other cells. It is clear that, if each of $C_1$ to $C_m$ are already synchronized ($x_i(t)=x_j(t)$ for all $i, j \in C_l$, for each $l=1,\ldots,m$), each node in $C$ receives an equal input from the rest of the network and thus may synchronize independently of the synchronization properties of the rest of the graph, for a high enough value of the coupling parameter.

All in all, being a SEC guarantees cluster synchronization, independently of the rest of the network, when the coupling parameter is high enough (indeed, when its critical value is reached), while being a SEC relative to clusters $C_1,\ldots,C_m$ guarantees cluster synchronization when $C_1$ to $C_m$ are synchronized and the coupling parameter is high enough. (Note that, for simplicity, this discussion only considers Type II systems, see Fig.~1 of the main text.)

We can now clarify the relation between SEC and (external) equitable partition. If $C_1, \ldots, C_r$ is an equitable partition with $r$ cells, each cell $C_i$ is a SEC relative to the other cells. This guarantees synchronization when  the critical values of all the cells are reached, but not necessarily before. In particular, it doesn't explain the order at which the cluster synchronization occurs, nor partial synchronization of, nor merger between, subsets of the equitable cells, as detected by the $S_n$ algorithm presented in the main text.

In the theoretical and numerical analysis of the present article, all the cluster synchronization examples found correspond to SECs, or in SECs relative to synchronized clusters, confirming our intuition (see Supplementary Figs.~\ref{fig:toy-equitable-2},~\ref{fig:toy-symmetries}). We conjecture that cluster synchronization can only occur in those cases.

\begin{figure}[t]
	\centering
	\includegraphics[width=1\textwidth]{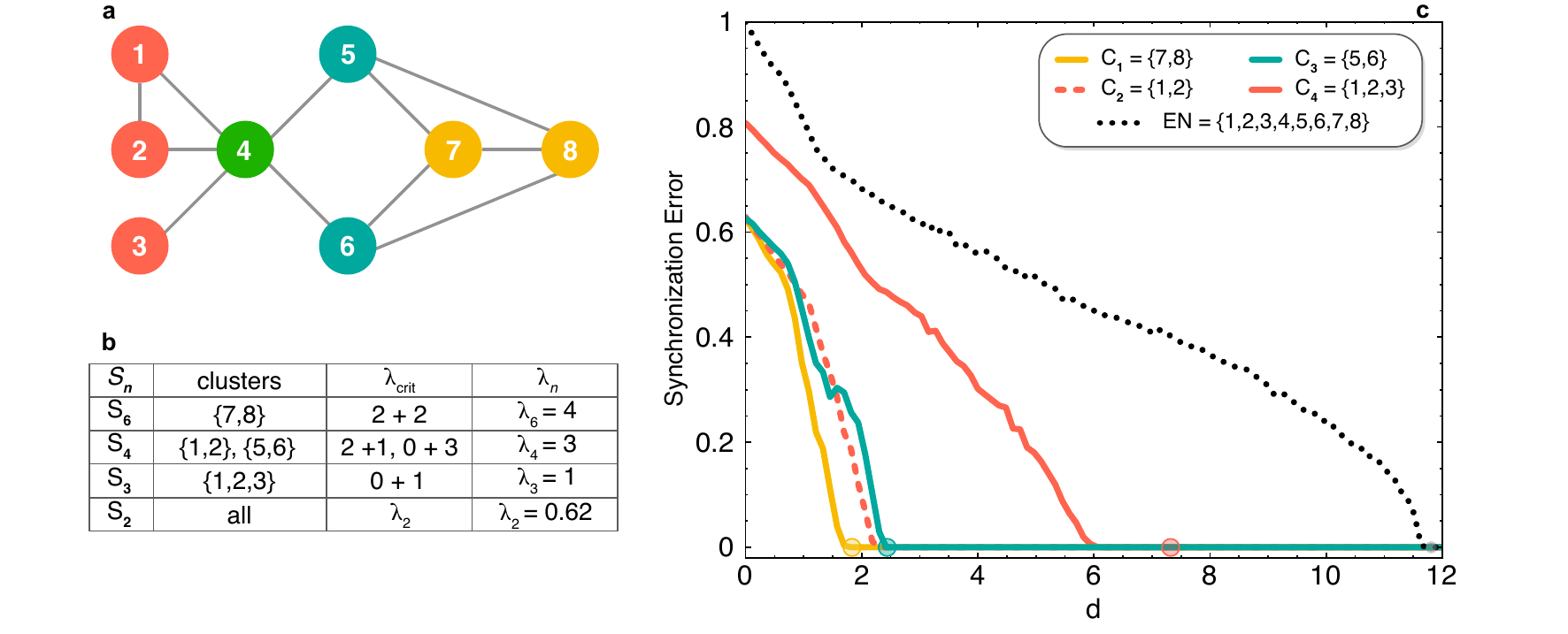}
	\caption{
		(a) A toy network with 8 nodes and 11 links, reproduced from~\cite{ocleary2013observability}. (b) Our $S_n$ algorithm correctly predicts cluster synchronization on clusters $\{7,8\}$, $\{1,2\}$ and $\{5,6\}$, and $\{1,2,3\}$, in that order. Note that all clusters are SECs, and all except $\{1,2,3\}$ are also SOs. For SECs, the critical eigenvalue $\lambda_\textup{crit}$ (shown as $\lambda+d$, where $\lambda$ is the second smallest Laplacian eigenvalue, and $d$ the external degree, of the induced subgraph) of each cluster, shown, determines the order at which cluster synchronization occurs (largest to smallest). The Laplacian eigenvalues of this toy network, $\lambda_1 \le \ldots \le \lambda_8$, are $0, 0.63, 1, 3, 3, 4, 4, 6.37$ (up to 2 decimal places) and, as predicted in Theorem~\ref{th:main}, $\lambda_\text{crit}$ coincides with $\lambda_n$ at the point when the corresponding 2-block in the matrix $S_n$ appears, as shown on the table above.
		Complete synchronization corresponds to the second smallest Laplacian eigenvalue of the whole network, $\lambda_2=0.62$ (2 decimal places). (c) Synchronization error for each of the predicted clusters (color-code in the legend). Simulations are performed with identical Lorenz chaotic oscillators, and with the same parameters and initial conditions used for generating Fig.~3a of the main text.
	}
	\label{fig:toy-equitable-2}
\end{figure}

\begin{figure}[t]
	\centering
	\includegraphics[width=0.98\textwidth]{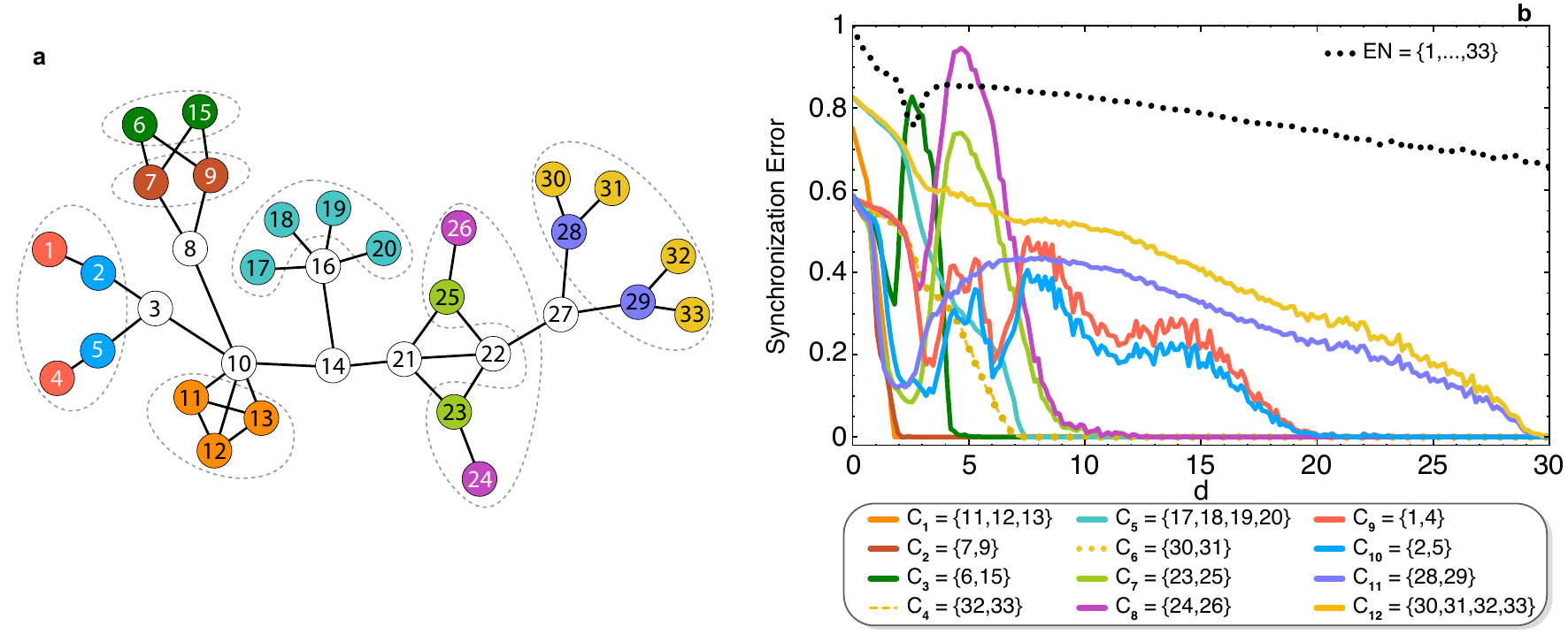}
	\caption{
		(a) A toy network with 33 nodes and 39 links, reproduced from~\cite{sanchez2020}, showing small examples of typical real-world symmetric motifs (subgraphs where the network symmetry is locally generated~\cite{sanchez2008, sanchez2020}), shown inside the dotted lines. Each symmetric motif is composed of one or more vertex orbits (SOs in our terminology), shown by color. Of the 10 SOs in this toy example, 4 are SECs (dark green, brown, orange, and turquoise), corresponding to symmetric motifs with one orbit, and 6 are relative SECs: the red orbit is a SEC relative to the dark blue orbit (and vice-versa), the purple orbit relative to the green orbit (and vice-versa), and the yellow orbit relative to the pale blue orbit (and vice-versa). Notably, the yellow orbit contains two SECs, namely the clusters $\{30,31\}$ and $\{32,33\}$, separately. 
		The Laplacian eigenvalues of this toy network, $\lambda_1 \le \ldots \le \lambda_{33}$, are $0.0$, $0.04$, $0.11$, $0.14$, $0.23$, $0.27$, $0.38$, $0.42$, $0.59$, $1.0$, $1.0$, $1.0$, $1.0$, $1.0$, $1.37$, $1.7$, $2.0$, $2.09$, $2.49$, $2.62$, $2.72$, $3.0$, $3.41$, $3.73$, $3.97$, $4.0$, $4.0$, $4.42$, $5.09$, $5.25$, $5.53$, $6.11$, and $7.3$ (up to 2 decimal places).
		(b) Our $S_n$ algorithm correctly predicts cluster synchronization on the clusters shown, in the order shown. Note that the four SOs that are SECs become synchronized at their critical value $\lambda_\text{crit}$ (shown as $\lambda+d$, as in Supplementary Fig.~\ref{fig:toy-equitable-2}), plus $\{30,31\}$ and $\{32,33\}$, which are SECs but not SOs. As expected, the critical value $\lambda_\text{crit}$ of each SEC corresponds to the $n$th smallest Laplacian eigenvalue of the whole network, $\lambda_n$, with $\lambda_2=0.04$ (2 decimal places) achieving global synchronization. The three relative SECs also become synchronized after, or at the same time as (depending on the relative values of the critical values) the clusters they are SECs relative to. Indeed, the red cluster becomes synchronized at the same time as the dark blue cluster, and the same for the purple and green clusters. For the yellow orbit, note it contains two (structurally identical) SECs ($\{30,31\}$, and $\{32,33\}$), which synchronize separately when their critical eigenvalue is reached, but their union, which is not a SEC, rather a SEC with relative to the light blue orbit, only synchronizes when the light blue orbit does. The predicted sequence of events is fully confirmed in our simulations.
	}
	\label{fig:toy-symmetries}
\end{figure}

\color{black}
\section{The synchronization clusters described in the main text}

This final Section contains the details of the various networks considered in the main text.

First, we report here below the adjacency matrix $A$ of the network depicted in Figure 2a of the main text, which is

\begin{equation}
	A_{ij} =
	\begin{bmatrix}
		0 & 0.1 & 0.1 & 0.1 & 0.1 & 0.1 & 0.1 & 0.1 & 0.1 & 0.1 \\
		
		0.1 & 0 & 0.1 & 0.1 & 0.1 & 0.1 & 0.1 & 0.1 & 0.1 & 0.1 \\
		
		0.1 & 0.1 & 0 & 0.1 & 0.1 & 0.1 & 0.1 & 0.1 & 0.1 & 0.1 \\
		
		0.1 & 0.1 & 0.1 & 0 & 0.529 & 0.529 & 0.529 & 0.529 & 0.529 & 0.529 \\
		
		0.1 & 0.1 & 0.1 & 0.529 & 0 & 0.529 & 0.529 & 0.529 & 0.529 & 0.529 \\
		
		0.1 & 0.1 & 0.1 & 0.529 & 0.529 & 0 & 0.529 & 0.529 & 0.529 & 0.529 \\
		
		0.1 & 0.1 & 0.1 & 0.529 & 0.529 & 0.529 & 0 & 1.029 & 1.029 & 1.029 \\
		
		0.1 & 0.1 & 0.1 & 0.529 & 0.529 & 0.529 & 1.029 & 0 & 1.029 & 1.029 \\
		
		0.1 & 0.1 & 0.1 & 0.529 & 0.529 & 0.529 & 1.029 & 1.029 & 0 & 1.029 \\
		
		0.1 & 0.1 & 0.1 & 0.529 & 0.529 & 0.529 & 1.029 & 1.029 & 1.029 & 0
	\end{bmatrix}
	\nonumber
\end{equation}

Second, we report the output of the application of the method described in the main text to the PowerGrid network.

In the first column of the list, we report the values of $\lambda$ at which an event of cluster formation occurs along the transition, and the corresponding values of $1/\lambda$ which (once multiplied by $\nu^*$) give the critical coupling strength's values at which such event has to be observed. Each event consists of the simultaneous emergence of synchronization clusters, and the nodes forming each one of them are reported in the second column of the list.

The 6 clusters that are highlighted in red are those whose synchronization error is reported in Fig. 5 of the main text, for a R\"ossler chaotic system with parameters and coupling function such that it belongs to Class II, displaying $\nu^*=0.179$.
The predicted critical coupling strengths $d_1,\dots, d_6$ reported in the horizontal axes of Fig. 5 of the main text are, therefore:
\begin{itemize}
	\item $d_1=0.179*0.25=0.04475$
	\item $d_2=0.179*0.333=0.0596$
	\item $d_3=0.179*0.5= 0.0895$
	\item $d_4=0.179*0.723=0.1294$
	\item $d_5=0.179*1=0.179$
	\item $d_6=0.179*1.707=0.3056$
\end{itemize}

As already discussed in the main text, a total of 381 clusters are found, involving an overall number of 871 network's nodes which get clustered during the transition:
310 clusters contain only 2 nodes, 49 clusters are made of 3 nodes, 14 clusters are formed by 4 nodes, 4 clusters have 5 nodes, 2 clusters appear with 6 nodes,
1 cluster has 7 nodes, and 1 cluster is made of 9 nodes.
\newpage
The list reported here below is limited to the first 11 predicted events.

\begin{longtable}{|c|p{10cm}|}
	\hline
	$\lambda$ & \qquad \qquad \qquad \qquad \qquad Clusters \\ \hline
	$ \begin{array}{c}
		\lambda=4 \\ \frac{1}{\lambda}=0.25
	\end{array} $ & [1081,1082] [2638,2640,2641] [2793,2794] [3037,3038] [3089,3090,3092] \colorbox{red}{[3220,3222]} [3226,3227] [3249,3250] [3252,3254] [3297,3298,3300]	[3349,3350,3351]\\ \hline
	
	$ \begin{array}{c} \lambda=3 \\ \frac{1}{\lambda}=0.333 \end{array} $ & [346,347] [2012,2013] [2153,2154] [2442,2443]	\colorbox{red}{[2452,2453]}	 [2689,2690] [2917,2918] [3057,3058] [3065,3069]	[3067,3068]	[3075,3076]		[3198,3199]	[3260,3262]	[3283,3284]	[3299,3301]	[3312,3313]	[3319,3320]	[3325,3326]	[3359,3360]	[3651,3652]	[4480,4481] \\ \hline
	
	$ \begin{array}{c} \lambda=2 \\ \frac{1}{\lambda}=0.5 \end{array} $ & [7,8]	[122,123]	[341,342]	[632,633]	[638,639]	\colorbox{red}{[641,643]}	[668,781]	[860,861] [966,967] [1154,1155] [1476,1478]	[1829,1834]	[1956,1960]	[1957,1961]	[1968,2125]	[2111,2112]	[2184,2186]	[2196,2262]	[2221,2222]	[2284,2285]	[2318,2320]	[2469,2470]	[2489,2490]	[2523,2830]	[2655,2656]	[2664,2665]	[2813,2814]	[2829,2834]	[2841,2842]	[2881,2882]	[2929,2935]	[3030,3031]	[3185,3187,3188][3419,3420]	[3474,3475]	[3538,3539]	[3554,3555]	[3557,3558]	[3726,3727]	[3804,3805]	[3902,3903]	[4162,4163]	[4455,4457]		[4868,4923] \\
	\hline
	$ \begin{array}{c} \lambda=1.753 \\ \frac{1}{\lambda}=0.57 \end{array} $& [582,586]	[585,587] [635,636]\\
	\hline
	
	$ \begin{array}{c} \lambda=1.382 \\ \frac{1}{\lambda}=0.723 \end{array} $& \colorbox{red}{[1344,1345]}[1825,1826][1835,1836][2256,2257]\\
	\hline
	
	$ \begin{array}{c} \lambda=1 \\ \frac{1}{\lambda}=1 \end{array} $&[12,4935]	[14,173]	[19,21]	[32,34]	[36,37]	[66,154]	[86,87,88]	[101,102]	[109,4510]	[133,135,136]	[206,207]	[213,214,215,216]	[218,4523]	[248,249] [256,257,258]	[284,285]	[348,391]	[355,357,360,361]	[405,406]	[416,417]	[419,420]	[435,436]	[461,462,463,464]	[468,469]	[573,575]	[591,592]	[593,646]	[596,597]	[601,4519,4520,4521,4522]	[605,607]	[647,648]	[654,655,656]	[683,689]	[690,703,704,705,706,707,708]	[693,694]	[699,700,702]	[709,712]	[717,719]	[759,771]	[778,779]	[783,784,785]	[806,807]	[808,809]	[825,836]	[839,885]	[841,842]	[858,859]	[870,878]	[872,874]	[912,913]	\colorbox{red}{[934,935,939,940,941,942,943,944,945]}	[981,982,983,984,985]	[987,988,989]	[1019,1020]	[1021,1022]	[1025,1026]	[1027,1028]	[1124,1126]	[1131,1132,1133]	[1142,1144,1145,1146]	[1209,1565,1570]	[1228,1230]	[1394,1395]	[1419,1420]	[1481,2276]	[1535,1536]	[1767,1768]	[1837,1838]	[1840,1841]	[1917,1921]	[1935,1936]	[2034,2139]	[2060,2062]	[2169,2170]	[2249,2250]	[2268,2269]	[2272,2273]	[2286,2442,2443]	[2352,2354]	[2386,2387]	[2401,2402,2403]	[2414,2415]	[2449,2450]	[2462,2463,2464]	[2497,2498]	[2553,2555,2562,2563]	[2569,2571]	[2576,2578]	[2635,2636]	[2642,2643]	[2671,2672]	[2705,2706]	[2709,2710,2711]	[2725,2726,2728]	[2730,2731]	[2742,2743]	[2744,2746]	[2748,2749]	[2758,2759]	[2835,2836,2838]	[2839,2840]	[2843,2844]	[2872,2873]	[2904,2905]	[2910,2913,2914,2915]	[2979,2980]	[2997,2998]	[3000,3001]	[3004,3005,3006,3007,3008,3009]	[3011,3012]	[3023,3024]	[3041,3042,3189,3190]	[3054,3268]	[3065,3067,3068,3069]	[3075,3076,3077]	[3080,3081,3082,3083]	[3084,3311]	[3088,3089,3090,3092]	[3104,3105]	[3196,3218,3219]	[3201,3202]	[3260,3261,3262]	[3283,3284,3285]	[3286,3287]	[3288,3289]	[3297,3298,3299,3300,3301]	[3309,3310]	[3318,3347,3364]	[3327,3328]	[3349,3350,3351,3369]	[3363,3365]	[3398,3399]	[3410,3412,3413,3724]	[3435,3436,3437]	[3464,3465]	[3480,3481]	[3484,3485]	[3559,3561]	[3572,3577,3578]	[3609,3632,4918]	[3611,3613,3614]	[3616,3730]	[3619,3620]	[3622,3629]	[3624,3704]	[3625,3801]	[3630,3705,3706]	[3639,3640]	[3658,3660]	[3661,3662,3666]	[3700,3812]	[3717,3718]	[3778,3796]	[3802,3818]	[3809,3810]	[3814,3838]	[3840,3841]	[3878,3879]	[3882,3885]	[3938,3944]	[3946,3947,3948]	[4097,4098]	[4107,4174]	[4110,4111,4113]	[4125,4126]	[4155,4156]	[4171,4172,4173]	[4175,4181]	[4177,4179]	[4195,4197]	[4211,4213]	[4217,4218,4219]	[4223,4224,4225]	[4232,4233]	[4252,4253]	[4262,4263]	[4264,4266]	[4270,4271]	[4282,4283]	[4287,4288,4289]	[4303,4304,4305]	[4307,4308,4309,4310]	[4311,4312,4313]	[4320,4321]	[4325,4326]	[4328,4329]	[4330,4331]	[4336,4337,4338,4339]	[4340,4341]	[4346,4347,4348,4349]	[4351,4352,4353]\\
	\hline
	
	&[4354,4355]	[4357,4358,4359,4360,4361]	[4362,4363]	[4365,4366,4367,4368,4369]	[4370,4371,4372,4373]	[4377,4378,4379,4380,4381,4382]	[4384,4385]	[4387,4388]	[4389,4390,4391]	[4393,4394]	[4395,4396,4397]	[4398,4399]	[4400,4401]	[4488,4489]	[4511,4512]	[4513,4515]	[4516,4518]	[4604,4605]	[4643,4647,4648,4649]	[4658,4659]	[4679,4680,4681]	[4684,4685]	[4702,4703]	[4717,4718,4719]	[4729,4730]	[4746,4860]	[4799,4813]	[4800,4802]	[4829,4830]	[4841,4842]	[4843,4844]	[4884,4885]	[4907,4908]\\
	\hline
	
	$ \begin{array}{c} \lambda=0.6972 \\ \frac{1}{\lambda}=1.434 \end{array} $&[2774,2939]	[2775,2940]	[4826,4828]	[4827,4850]\\
	\hline
	
	$ \begin{array}{c} \lambda=0.6228 \\ \frac{1}{\lambda}=1.606 \end{array} $&[614,615]	[617,618]	[622,623]\\
	\hline
	
	$ \begin{array}{c} \lambda=0.5858 \\ \frac{1}{\lambda}=1.707 \end{array} $&[564,565,566]	[2510,2949]	[2511,2627]	[2798,2799]	[2831,2832,2833]	\colorbox{red}{[2852,2853,2854]}	[2911,2912]	[2946,2947]	[2964,2965]	[3128,3177]	[3208,3209]	[3211,3212,3213]	[3233,3234,3235]	[3244,3245,3246]	[3279,3280]	[3372,3373]\\
	\hline
	
	$ \begin{array}{c} \lambda=0.5188 \\ \frac{1}{\lambda}=1.927 \end{array} $&[3742,3746,3749][3743,3748,3750][3744,3745,3747]\\
	\hline
	
	$ \begin{array}{c} \lambda=0.3820 \\ \frac{1}{\lambda}=2.618 \end{array} $&[868,907]	[869,904]	[1568,1571]	[1616,2045]	[2750,2774,2939]	[2751,2775,2940]	[2891,3043]	[2892,2894]	[2897,2901]	[2898,2900]	[3146,3150]	[3147,3151]	[3148,3272]	[3149,3273]\\
	\hline
	
\end{longtable}

Third, we report details of the other two real-world networks which were analyzed in Figure 6 of the main text.

The Yeast protein-protein interaction network~\cite{yu2008} is a dataset made of $N=1,647$ nodes and $E=2,518$ edges. During the transition to synchronization, 188 clusters are found, and Figure 6 of the main text refers to 4 of them: $C_1$ and $C_2$ (which are both 2 nodes clusters), $C_3$ (a cluster containing 3 nodes), $C_4$ (a cluster of 7 nodes).
On the other hand, the ego-Facebook network~\cite{mcauley2012} is a dataset containing $N=2,888$ nodes and $E=2,981$ edges. 18 clusters are found during the transition to synchronization, and once again we focused on 4 of them: $C_1$ and $C_2$ (which are both made of 2 nodes), $C_3$ (a cluster containing 5 nodes), $C_4$ (a subset of 3 nodes - out of a cluster of 280 nodes - which is forming a cluster by itself). The predicted values of the critical coupling strengths are reported, as colored points, in the horizontal axis of Figures 6a and 6b of the main text.

\renewcommand\refname{Supplementary References}


\begin{thebibliography}{99}%
	
	\bibitem{book1} A. Pikovsky, M. Rosenblum, and J. Kurths. Synchronization: a universal concept in nonlinear sciences (Cambridge University Press, 2003)
	\bibitem{book2} S. Strogatz. Sync: The emerging science of spontaneous order (Penguin Press Science, UK, 2004)
	\bibitem{book3}S. Boccaletti, A.N. Pisarchik, C.I. Del Genio, A. Amann. Syncronization: From Couped Systems to Complex Networks (Cambridge University Press, 2018).
	
	\bibitem{BoccalettiPhysRep2006} S. Boccaletti, V. Latora, Y. Moreno, M. Chavez, and D.-
	U. Hwang, Complex networks: Structure and dynamics,
	Physics Reports 424, 175 (2006).
	
	\bibitem{Motter_NatPhys2013} A. E. Motter, S. A. Myers, M. Anghel, and T. Nishikawa,
	Spontaneous synchrony in power-grid networks, Nature Physics 9, 191 (2013).
	
	\bibitem{Rodrigues_PhysReport2016} F. A. Rodrigues, T. K. D. Peron, P. Ji, and J. Kurths,
	The kuramoto model in complex networks, Physics Reports
	610, 1 (2016).
	
	\bibitem{Breakspear_NatNeuro2017} M. Breakspear, Dynamic models of large-scale brain activity,
	Nature Neuroscience 20, 340 (2017).
	
	\bibitem{Boccalettiexplosive} S. Boccaletti, J. Almendral, S. Guan, I. Leyva, Z. Liu, I. Sendi{\~n}a-Nadal,
	Z. Wang, Y. Zou, Explosive transitions in complex networks’ structure and dynamics: Percolation and synchronization,
	Physics Reports 660, 1-94 (2016).
	
	\bibitem{tanaka1997} H.-A. Tanaka, A. J. Lichtenberg, S. Oishi, First order phase transition
	resulting from finite inertia in coupled oscillator systems, Physical Review Letters 78, 2104 (1997).
	
	\bibitem{pazo2005} D. Paz\'o, Thermodynamic limit of the first-order phase transition in the
	Kuramoto model, Physical Review E 72, 046211 (2005).
	
	\bibitem{jesusprl2011} J. G\'omez-Garde{\~n}es, S. G\'omez, A. Arenas, Y. Moreno, Explosive synchronization
	transitions in scale-free networks, Physical Review Letters 106, 128701 (2011).
	
	\bibitem{leyvaprl} I. Leyva, R. Sevilla-Escoboza, J. Buld\'u, I. Sendi{\~n}a-Nadal, J. G\'omez-Garde{\~n}es,
	A. Arenas, Y. Moreno, S. G\'omez, R. Jaimes-Re\'ategui, S. Boccaletti, Explosive first-order transition
	to synchrony in networked chaotic oscillators, Physical Review Letters 108, 168702 (2012).
	
	\bibitem{zhangprl} X. Zhang, S. Boccaletti, S. Guan, Z. Liu, Explosive synchronization in
	adaptive and multilayer networks, Physical Review Letters 114, 038701 (2015).
	
	\bibitem{JesusPRL2007} J. G\'omez-Garde{\~n}es, Y. Moreno, and A. Arenas, Paths to
	synchronization on complex networks, Physical Review Letters 98, 034101 (2007).
	
	\bibitem{sorrentinopre2007} F. Sorrentino and E. Ott, Network synchronization of
	groups, Physical Review E 76, 056114 (2007).
	
	\bibitem{williamsprl2013}C. R. Williams, T. E. Murphy, R. Roy, F. Sorrentino,
	T. Dahms, and E. Sch\"oll, Experimental observations of
	group synchrony in a system of chaotic optoelectronic
	oscillators, Physical Review Letters 110, 064104 (2013).
	
	\bibitem{jiprl2013} P. Ji, T. K. D. Peron, P. J. Menck, F. A. Rodrigues, and
	J. Kurths, Cluster explosive synchronization in complex
	networks, Physical Review Letters 110, 218701 (2013).
	
	\bibitem{nicosiaprl2013} V. Nicosia, M. Valencia, M. Chavez, A. D\'iaz-Guilera, and
	V. Latora, Remote synchronization reveals network symmetries
	and functional modules, Physical Review Letters
	110, 174102 (2013).
	
	\bibitem{pecoranature} L. M. Pecora, F. Sorrentino, A. M. Hagerstrom, T. E.
	Murphy, and R. Roy, Cluster synchronization and isolated
	desynchronization in complex networks with symmetries,
	Nature Communications 5, 4079 (2014).
	
	\bibitem{gambuzza} L. V. Gambuzza and M. Frasca, A criterion for stability
	of cluster synchronization in networks with external
	equitable partitions, Automatica 100, 212 (2019).
	
	\bibitem{dellarossa} F. Della Rossa, L. Pecora, K. Blaha, A. Shirin, I. Klickstein,
	and F. Sorrentino, Symmetries and cluster synchronization
	in multilayer networks, Nature Communications
	11, 1 (2020).
	
	\bibitem{fufu}
	C. Fu, W. Lin, L. Huang, and X. Wang, Synchronization transition in networked chaotic oscillators: The viewpoint from partial synchronization,
	Physical Review E, 89(5), 052908 (2014).
	
	\bibitem{sorrentino2016}
	F. Sorrentino, L. M. Pecora, A. M. Hagerstrom, T. E. Murphy, and R. Roy, Complete characterization of the stability of cluster synchronization in complex dynamical networks, Science Advances 2, e1501737 (2016)
	
	\bibitem{siddique2018}
	A. B. Siddique, L. M. Pecora, J. D. Hart, and F. Sorrentino, Symmetry-and input-cluster synchronization in networks,
	Physical Review E, 97(4), 042217 (2018).
	
	\bibitem{MacArthur2008}
	B. D. MacArthur, R. J. S\'anchez-Garc\'ia, and J. V. Anderson,  Symmetry in complex networks,
	Discrete Applied Mathematics 156, 3525-3531 (2008).
	
	\bibitem{Sanchez2020}
	R. J. S\'anchez-Garc\'ia, Exploiting symmetry in network analysis,
	Communications Physics 3, 87 (2020).
	
	\bibitem{MacArthur2009}
	B. D. MacArthur and R. J. S\'anchez-Garc\'ia, Spectral characteristics of network redundancy,
	Physical Review E 80, 26117 (2009).
	
	\bibitem{Ball2018} F. Ball and A. Geyer-Schulz, How symmetric are real-world graphs? A large-scale study,
	Symmetry 10, 29 (2018).
	
	\bibitem{Ward2021}
	J. A. Ward, Dimension-reduction of dynamics on real-world networks with symmetry,
	Proceedings of the Royal Society A 477, 20210026 (2021).
	
	\bibitem{pecora1998master}
	L. M. Pecora and T. L. Carroll, Master Stability Functions for Synchronized Coupled Systems,
	Physical Review Letters 80, 2109 (1998).
	
	\bibitem{sun2009master}
	J. Sun, E. M. Bollt, and T. Nishikawa, Master stability functions for coupled nearly identical dynamical systems, Europhysics
	Letters 85, 60011 (2009).
	
	\bibitem{pecoraboc2}
	C. I. del Genio, M. Romance, R. Criado, and S. Boccaletti, Synchronization in dynamical networks with unconstrained structure switching,
	Physical Review E 92, 062819 (2015).
	
	\bibitem{boccanatcomm}
	L. V. Gambuzza, F. Di Patti L. Gallo, S. Lepri, M. Romance, R. Criado, M. Frasca, V. Latora, and S. Boccaletti,
	Stability of synchronization in simplicial complexes, Nature Communications 12, 1255 (2021).
	
	\bibitem{boccareview2023}
	S. Boccaletti, P. De Lellis, C.I. del Genio, K. Alfaro-Bittner, R. Criado, S. Jalan, and M. Romance, The structure and dynamics of networks with higher order
	interactions, Physics Reports 1018, 1 (2023).
	
	\bibitem{cho2017}
	Y.S. Cho, T. Nishikawa ad A.E. Motter, Stable chimeras and independently synchronizable clusters, Physical  Review Letters 119, 084101 (2017).
	
	\bibitem{dellarossa2020}
	F. Della Rossa, L. Pecora, K.  Blaha, A. Shirin, I. Klickstein and F. Sorrentino,
	Symmetries and cluster synchronization in multilayer networks, Nature Communications, 11(1), 3179 (2020).
	
	\bibitem{zhang2020}
	Y. Zhang and A.E. Motter, Symmetry-independent stability analysis of synchronization patterns, SIAM Rev. 62, 817–836 (2020).
	
	\bibitem{lodi2021}
	M. Lodi, F. Sorrentino and M. Storace, One-way dependent clusters and stability of cluster synchronization in directed networks,
	Nature Communications, 12(1), 4073 (2021).
	
	\bibitem{barambara} M. T. Schaub, N. O'Clery, Y. N. Billeh, J.-C. Delvenne,  R. Lambiotte, and M. Barahona, Graph partitions and cluster synchronization in networks of oscillators,
	Chaos: An Interdisciplinary Journal of Nonlinear Science 26, 094821 (2016).
	
	\bibitem{roessler}
	O. E. R\"ossler, An equation for continuous chaos, Physics Letters A 57, 397 (1976).
	
	\bibitem{lorenz}
	E. N. Lorenz, Deterministic nonperiodic flow, Journal of the Atmospheric Sciences 20, 130 (1963).
	
	\bibitem{noinoi1}
	P. Khanra, S. Ghosh, D. Aleja, K. Alfaro-Bittner, G. Contreras-Aso, R. Criado, M. Romance, S. Boccaletti, P. Pal, C. Hens,
	Endowing networks with desired symmetries and modular behavior, Physical Review E 108, 054309 (2023).
	
	\bibitem{smallworld}
	D. J. Watts, and  S. H. Strogatz, Collective dynamics of 'small-world' networks,  Nature 393, 440 (1998).
	
	\bibitem{yu2008}
	H. Yu, P. Braun, M. A. Yildirim, I. Lemmens {\textit {et al.}}, High-Quality Binary Protein Interaction Map of the Yeast
	Interactome Network, Science 322, 104-110, (2008).
	
	\bibitem{mcauley2012}
	J. McAuley and J. Leskovec, Learning to discover social circles in ego networks,
	Proceedings in Advances in Neural Information Processing Systems, 548–556 (2012).
	
\end{thebibliography}

\begin{thebibliography}{99}
	
	\bibitem{sanchez2008}
	B.D. MacArthur, R.J. S\'anchez-Garc\'ia, and J.W. Anderson, Symmetry in complex networks. Discrete Applied Mathematics 156 (18), 3525-3531 (2008).
	
	\bibitem{sanchez2009}
	B.D. MacArthur, and R.J. S\'anchez-Garc\'ia, Spectral characteristics of network redundancy. Physical Review E, 80, 026117 (2009).
	
	\bibitem{ocleary2013observability}
	O’Clery, N., Yuan, Y., Stan, G.B., and Barahona, M. Observability and coarse graining of consensus dynamics through the external equitable partition. Physical Review E, 88(4), 042805 (2013).
	
	\bibitem{sanchez2020}
	R.J. S\'anchez-Garc\'ia, Exploiting symmetry in network analysis. Nature Communications Physics 3, 87 (2020).
	
	\bibitem{yu2008}
	H. Yu, P. Braun, M.A. Yildirim, I. Lemmens et al., High-Quality Binary Protein Interaction Map of the Yeast
	Interactome Network, Science 322, 104-110, (2008).
	
	\bibitem{mcauley2012}
	J. McAuley and J. Leskovec, Learning to discover social circles in ego networks,
	Proceedings in Advances in Neural Information Processing Systems, 548–556 (2012).
	
\end{thebibliography}
\end{document}